\documentclass[lettersize,onecolumn]{IEEEtran}
\IEEEoverridecommandlockouts
\usepackage{cite}
\usepackage{subfigure}
\usepackage{amsmath,amssymb,amsfonts}
\usepackage{algorithmic}
\usepackage{graphicx}
\usepackage{textcomp}
\usepackage{xcolor}
\usepackage{amsthm}
\theoremstyle{plain}
\newtheorem{theorem}{\textbf{Theorem}}
\newtheorem{lemma}{\textbf{Lemma}}

\newtheorem{definition}{Definition}

\newtheorem{proposition}{\textbf{Proposition}}
\newtheorem{remark}{\textbf{Remark}}
\newtheorem{corollary}{\textbf{Corollary}}
\def\BibTeX{{\rm B\kern-.05em{\sc i\kern-.025em b}\kern-.08em
    T\kern-.1667em\lower.7ex\hbox{E}\kern-.125emX}}
\begin{document}

\title{Generative Rate-Distortion-Perception Theory on Riemannian Manifolds}
\author{Tong Wu, Zhiyong Chen, \emph{Senior Member, IEEE}, Meixia Tao, \emph{Fellow, IEEE}, Dazhi He, \emph{Member, IEEE},\\ Feng Yang, and Wenjun Zhang, \emph{Fellow, IEEE}
\thanks{The authors are with the School of Information Science and Electronic Engineering, Shanghai Jiao Tong University, Shanghai 200240, China (Email: \{wu\_tong, zhiyongchen, mxtao, hedazhi, yangfeng, zhangwenjun\}@sjtu.edu.cn). \emph{Corresponding author: Zhiyong Chen.}}}

\maketitle
%\begin{abstract}
%Generative communication is rapidly developing, where a generative model is adopted at the receiver to generate high-quality data by randomly sampling from a conditional distribution controlled by the transmitted information. A fundamental theory for generative communication should therefore not only characterize the relation among rate, distortion and conditional-distribution based perception quality, but also capture the intrinsic geometric structure of real-world sources and the reliability of stochastic generation. 
%To this end, we consider an asymptotic operational generative rate-distortion-perception (GRDP) formulation with conditional-distribution preservation, where Riemannian manifolds are introduced to model the source space and an excess-distortion probability based on the intrinsic geodesic distance is adopted as the distortion measure to describe the reliability of stochastic generation.  We derive its single-letter characterization through an equivalent informational formulation and the corresponding geometric lower and upper bounds in high-reliability regime through Riemannian geometry, revealing how the intrinsic dimension and Riemannian curvature of the source manifold affect the required communication rate. 
%In the case of homogeneous Riemannian manifolds with uniformly distributed source, we prove that these bounds coincide, yielding a tight characterization. Finally, spherical manifolds are studied as a representative example of homogeneous manifolds to demonstrate the explicit computation of the proposed rate bounds.
%\end{abstract}

\begin{abstract}
Generative communication uses transmitted information
to control stochastic generation at the receiver.
A fundamental question is how much source-specific
information must be communicated so that outputs
generated at the receiver satisfy prescribed fidelity
and reliability requirements while preserving the
conditional source distribution.
We address this question for sources on compact
Riemannian manifolds by developing a generative
rate-distortion-perception (GRDP) framework without
common randomness. We formulate an asymptotic operational
GRDP function under a geodesic distortion radius and
a failure-probability constraint, and establish its
single-letter characterization through an equivalent
informational formulation. Furthermore, for the informational formulation with the high-reliability limit, we derive a geometric
lower bound determined by the largest source probability
mass of a region with bounded geodesic diameter,
together with an upper bound based on probabilistic
coverings of the source manifold. Small-radius analysis
of the lower bound reveals the roles of intrinsic
dimension and scalar curvature in the required
communication rate.
For uniformly distributed sources on compact homogeneous
Riemannian manifolds, we prove that the bounds coincide,
yielding an exact characterization of the limiting
GRDP function. We further specialize this result to unit spheres, obtaining an explicit expression
that relates the minimum communication rate to the
geodesic distortion radius and intrinsic dimension, revealing that for the $d$-dimensional unit sphere with $d\ge2$, increasing the communication rate by $d$ bits per source symbol can reduce the distortion radius by more than half.
\end{abstract}
\begin{IEEEkeywords}
Generative communication, generative rate-distortion-perception theory, Riemannian manifold, conditional-distribution preservation, tight bounds.
\end{IEEEkeywords}

\section{Introduction}
Recent advances in generative artificial intelligence have made it increasingly feasible to model natural data distributions and generate high-quality samples from learned models \cite{Score-DM,lipman2023flow,z-image,hunyuanvideo,wan2025}. These developments have motivated generative communication, in which transmitted information controls a stochastic generation process at the receiver \cite{GenCom1,GenCom2,GenCom3,GenCom4,GenCom5}. This paradigm shifts the emphasis from reconstructing a source representation through conventional decoding to conveying source-specific conditions that guide generation at the receiver. In this paradigm, the learned source prior captures statistical regularities shared across source realizations, while the transmitted message conveys information about the particular realization to be communicated. The receiver conditions the generative model on this message and samples an output using local randomness \cite{JSCGC}, as illustrated in Fig.~\ref{fig:system_model}. This approach has been explored for image transmission \cite{HiFiC,diffcom}, video delivery \cite{gao2025givic,mao2026generative,GVSC}, immersive media \cite{sehad2025generative,yang2026gensecpc}, and extended-reality services \cite{yang2024streamlined}.

These developments call for an information-theoretic framework that characterizes the fundamental limits of generative communication. A key question is \textbf{how much source-specific information must be communicated for outputs generated at the receiver to satisfy prescribed fidelity and reliability requirements while preserving the conditional source distribution}. Answering this question requires characterizing the minimum communication rate under joint fidelity, reliability, and conditional-distribution constraints.

The rate-distortion-perception (RDP) framework provides a natural starting point \cite{blau2019rethinking}, as it jointly characterizes communication rate, reconstruction fidelity, and distributional realism. It extends classical rate-distortion theory by introducing a perception constraint that measures the discrepancy between the source and reconstruction distributions. Subsequent work established coding theorems for the RDP function and clarified the role of randomness in encoding and decoding \cite{theis2021coding,chen2022rate}. In particular, under symbol-wise perception constraints, the operational RDP function admits a single-letter characterization without common randomness \cite{chen2022rate}. Under sequence-level marginal perception constraints, however, obtaining a general characterization without common randomness is more challenging. Existing studies have therefore exploited additional structure in the perception constraint, distortion measure, or source distribution.

One line of work considers the perfect-perception setting, in which the reconstruction is required to preserve the source distribution \cite{saldi2015output,chen2022rate,hamdi2024rate}. These studies establish single-letter characterizations under the corresponding operational criteria and examine the role of randomness. Another line exploits specific distortion and perception measures to obtain tractable characterizations. For example, \cite{qu2025rate} establishes a single-letter characterization under squared-error distortion and a squared Wasserstein-2 perception measure with limited common randomness, together with an explicit evaluation for Gaussian sources. In addition, \cite{serra2024copula} develops copula-based methods for estimating the perfect-perception RDP function of continuous sources and derives analytical lower bounds under mean squared error (MSE) distortion.

While these studies primarily constrain marginal distributions, \cite{salehkalaibar2024rdp} investigates a conditional-distribution-based perception measure that compares the source and reconstruction distributions conditioned on the encoder output. Building on the posterior reference map introduced in \cite{atif2024posterior}, it establishes a single-letter characterization under sequence-level conditional perception constraints without common randomness for discrete memoryless sources. For continuous memoryless sources, it further establishes a characterization in Euclidean spaces under squared-error distortion and a squared Wasserstein-2 perception measure, including the perfect-perception case. For Gaussian vector sources, the resulting characterization admits a reverse water-filling-type solution. These results provide a foundation for analyzing how transmitted information constrains the distribution of generated outputs.

Despite these advances, existing RDP characterizations do not directly address the joint role of intrinsic source geometry and generation reliability under conditional-distribution preservation. Extending these results to the setting considered here requires addressing two challenges.

First, many tractable RDP characterizations for continuous sources are developed in Euclidean spaces with squared-error distortion. Natural media sources, however, may exhibit low-dimensional geometric structure within a high-dimensional ambient space. The manifold hypothesis suggests that their distributions can concentrate on or near low-dimensional manifolds \cite{seung2000manifold,fefferman2016testing,farghly2025diffusion,roweis2000nonlinear}. For example, although facial images are represented by high-dimensional pixel vectors, much of their variation can be described by a small number of factors, such as pose, expression, and illumination \cite{seung2000manifold}. This motivates modeling structured sources on Riemannian manifolds and measuring fidelity using intrinsic geodesic distance. Although such sources can also be represented in an ambient Euclidean space, ambient distance need not capture their intrinsic geometry. A geometric characterization is therefore needed to make explicit how the intrinsic dimension and curvature of the source manifold affect the required communication rate.

Second, conditional-distribution consistency and generation reliability impose distinct requirements. In generative communication, the receiver samples an output from a conditional distribution specified by the received information. Even when this distribution matches the conditional source distribution exactly, an individual generated output need not be close to the actual source realization. Different outputs sampled under the same received information can therefore have substantially different distortions. Average-distortion criteria do not explicitly quantify the probability of exceeding a prescribed distortion threshold. The relevant question is thus how much communication is required to satisfy a prescribed geodesic distortion tolerance with a target failure probability while preserving the conditional source distribution.

\begin{figure*}
    \centering
    \includegraphics[width=1\textwidth]{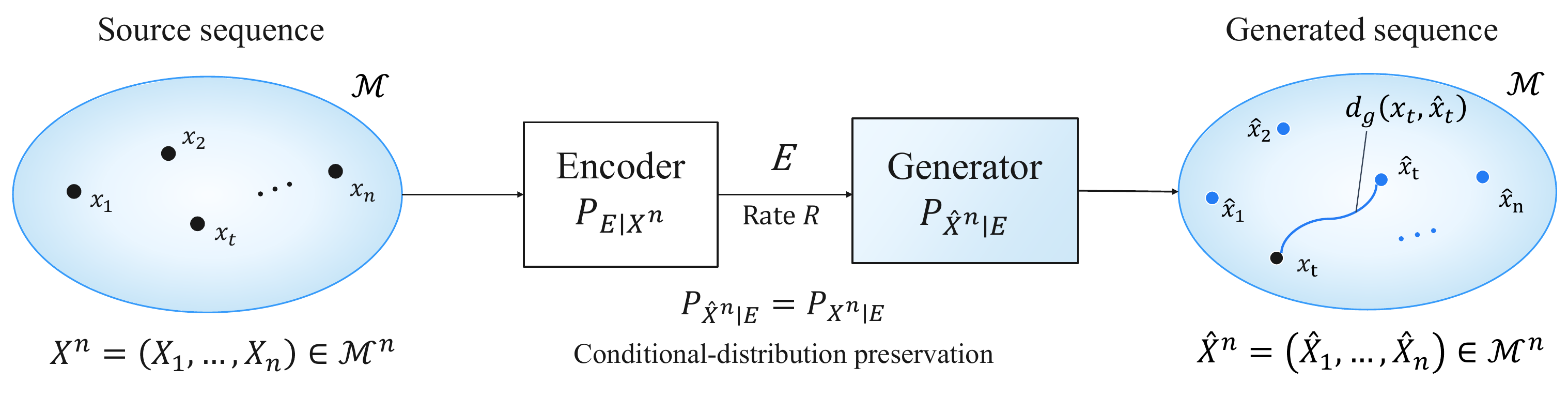}
    \caption{Generative communication model on a Riemannian
manifold under conditional-distribution preservation.
The black point on the right serves as the source
reference for illustrating the geodesic distortion
$d_g(\mathbf x_t,\hat{\mathbf x}_t)$.}
\label{fig:system_model}
\end{figure*}

Motivated by these considerations, we extend the Euclidean source model to sources taking values on Riemannian manifolds and measure fidelity using the geodesic distance induced by the Riemannian metric. To capture the reliability of stochastic generation, we adopt a geodesic excess-distortion criterion \cite{kostina2012fixed}, which quantifies the probability that the distortion between the source and the generated output exceeds a prescribed threshold. Together with conditional-distribution preservation, this criterion allows us to characterize the tradeoff among communication rate, geodesic distortion radius, and generation failure probability, thereby quantifying the fidelity and reliability achievable through communication-controlled generation.

The contributions of this paper can be summarized as follows:
%we formulate an asymptotically operational rate function without common randomness that explicitly accounts for describing both the intrinsic geometry of the source space and the reliability of generation with conditional-distribution preservation. 
% Furthermore, we derive a single-letter characterization of the proposed asymptotically operational rate function. Building upon this characterization, we further develop its geometric lower bounds through Riemannian geometric analysis, and an achievable upper bound by introducing the vertex-packing function, revealing the influence of geodesic feature of source data, such as intrinsic dimension and curvature on the RDP trade-off. Moreover, we consider a special case of homogeneous manifolds, proving that the upper and lower bounds coincide, yielding a tight characterization on homogeneous manifolds. Finally, we investigate the spherical manifold as a representative homogeneous manifold to demonstrate the computation and implication of the proposed theoretical bounds.

\begin{itemize}
   % \item We formulate an asymptotic operational generative RDP (GRDP) function for manifold-valued sources under prescribed geodesic distortion and failure-probability constraints, subject to conditional-distribution preservation without common randomness. This formulation incorporates both the intrinsic geometry of the source space and the reliability of stochastic generation. We also define a corresponding informational GRDP function and establish its equivalence to the operational function in Theorem \ref{theorem:coding_theorem}, yielding a single-letter characterization.
\item We formulate an asymptotic operational generative
rate-distortion-perception (GRDP) function on Riemannian
manifolds under exact conditional-distribution preservation
and without common randomness. The formulation incorporates
a prescribed geodesic distortion radius and an
excess-distortion probability to capture the fidelity
and reliability of stochastic generation. In Theorem~\ref{theorem:coding_theorem}, we establish
an equivalent single-letter informational characterization,
thereby identifying the minimum asymptotic communication
rate required to control generation under these constraints.

%\item Using Riemannian geometric analysis, we derive geometric lower and upper bounds on the single-letter characterization in the high-reliability regime in Theorems \ref{thm_generative_rate_distortion_converse} and \ref{thm_general_alpha_achievability}, respectively. These bounds reveal how the rate-distortion trade-off depends on the intrinsic geometric properties of the source manifold, including its intrinsic dimension and Riemannian curvature.

\item We derive geometric lower and upper bounds on the
informational GRDP function in the high-reliability regime in
Theorem~\ref{thm_generative_rate_distortion_converse}
and Theorem~\ref{thm_general_alpha_achievability}, respectively.
The lower bound relates the necessary communication rate
to the maximum source probability contained in a region
of bounded geodesic diameter. The upper
bound provides an achievable construction based on
probabilistic coverings by such regions, with overlapping
regions offering flexibility in representing the source.
A small-radius analysis further makes explicit how the
intrinsic dimension and scalar curvature contribute to
the rate lower bound.

  %  \item For uniformly distributed sources on homogeneous Riemannian manifolds, we establish in Theorem \ref{thm:homogeneous_upper_bound} that the lower and upper bounds coincide in the high-reliability regime, yielding a tight characterization of the GRDP function. We further specialize this characterization to spherical manifolds, obtaining an explicit evaluation of the communication rate required to satisfy prescribed geodesic distortion and failure-probability constraints.

\item We establish an exact characterization of the informational GRDP
function in the high-reliability limit for uniformly
distributed sources on compact homogeneous Riemannian
manifolds in Theorem~\ref{thm:homogeneous_upper_bound}.
Specifically, we show that the geometric lower bound
is achievable by randomizing over isometric images
of admissible regions. The minimum communication rate
is thus determined by the largest source probability 
mass of a region satisfying the prescribed geodesic
diameter constraint. We further specialize this result
to unit spheres, obtaining an explicit expression
that relates the required rate to the geodesic distortion
radius and intrinsic dimension, revealing that for the $d$-dimensional unit sphere with $d\ge2$, increasing the communication rate by $d$ bits per source symbol can reduce the distortion radius by more than half.
\end{itemize}

The remainder of this paper is organized as follows. Section II introduces the source and system models, motivates the distortion and perception criteria, and defines the operational and informational GRDP functions under conditional-distribution preservation. Section III establishes the equivalence of these functions, yielding a single-letter characterization, and derives lower and upper bounds that reveal the role of intrinsic source geometry. Section IV establishes the tightness of these bounds for homogeneous manifolds and illustrates their evaluation for spherical sources. Section V concludes the paper.

\section{Problem Statement}
\subsection{Geometric Source Model}
% To characterize the intrinsic geometry of the source space, we model the support of the source distribution as a compact Riemannian manifold $\mathcal M$ embedded in an ambient Euclidean space $\mathbb R^m
% To mathematically analyze the intrinsic geometric properties of natural data, we model the intrinsic data space as a Riemannian manifold, given that natural data $\mathbf{x}$ resides on a low-dimensional manifold $\mathcal{M}$ embedded in the high-dimensional ambient space $\mathbb{R}^m$. The intrinsic dimension of the manifold, denoted by $d$, satisfies $d < m$.
As shown in Fig.~\ref{fig:system_model}, we consider to characterize the fundamental limits of generative communications, which transmits the continuous-valued natural data from the real world. As suggested in \cite{seung2000manifold,fefferman2016testing,farghly2025diffusion,roweis2000nonlinear}, the real-world data may concentrate on low-dimensional manifolds rather than occupying the entire ambient Euclidean space. Therefore in this case, it is necessary to explicitly account for the intrinsic geometry of the source space and investigate its impact on communication rate. To this end, we model the support of the source distribution as a Riemannian manifold $\mathcal{M}$ with intrinsic dimension $d$,  embedded in $\mathbb{R}^{m}$, with $d<m$.

The manifold $\mathcal{M}$ is then equipped with the necessary geometric structures.
% Therefore, all the measure-theoretic and information-theoretic we will discuss in the following sections are defined intrinsically on the manifold $\mathcal{M}$ rather than the ambient space $\mathbb{R}^m$.
% Therefore, we first equip the manifold $\mathcal{M}$ with a fundamental geometric structure to mathematically describe this manifold. Specifically, we define the following key components:
\begin{itemize}
    \item \textbf{Tangent Space} $T_\mathbf{x}\mathcal{M}$: Denotes the tangent space of $\mathcal{M}$ at a point $\mathbf{x}$, which is a $d$-dimensional linear vector space.
    \item \textbf{Riemannian Metric} $g$: Let $g$ be a Riemannian metric on $\mathcal M$. At each point $\mathbf x\in\mathcal M$, $g_{\mathbf x}$ is a positive-definite symmetric bilinear form on the tangent space $T_{\mathbf x}\mathcal M$.
    \item \textbf{Scalar Curvature} $S(\mathbf{x})$: Denotes the scalar curvature of the manifold at point $\mathbf{x}$ induced by $g$, which quantifies the infinitesimal bending of the local space at that point.
    \item \textbf{Volume measure} $dV_g$: Induced by the Riemannian metric $g$ with respect to the standard Euclidean coordinate volume element $d\mathbf{x} = d\mathbf{x}^1 \dots d\mathbf{x}^d$ is given by:
    \begin{equation}
    dV_g = \sqrt{\det g}   d\mathbf{x},
    \end{equation} 
    \item \textbf{Volume of a region} $V_g(O)$: For any Borel measurable subset $O \subseteq \mathcal{M}$, its Riemannian volume is defined as the Lebesgue integral of the volume measure over the target space $O$:
    \begin{equation}
    V_g(O) = \int_{O} dV_g.
    \end{equation} 
    \item \textbf{Curves Space} $\Omega(\mathbf{x}_1,\mathbf{x}_2)$: Denotes the space of all piecewise smooth parameterized curves on $\mathcal{M}$ connecting two points $\mathbf{x}_1$ and $\mathbf{x}_2$ on the manifold.
    \item \textbf{Riemannian length} $L(\gamma)$: is defined as the integral of the norm of its tangent vector $\dot{\gamma}(t) = \frac{d\gamma}{dt} \in T_{\gamma(t)}\mathcal{M}$ under the metric $g$ for any curve $\gamma \in \Omega(\mathbf{x}_1,\mathbf{x}_2)$ in this space,
    \begin{equation}
    L(\gamma) = \int_0^1 \sqrt{g_{\gamma(t)}(\dot{\gamma}(t), \dot{\gamma}(t))}   dt.
    \end{equation}
    \item \textbf{Geodesic Distance} $d_g(\mathbf{x}_1, \mathbf{x}_2)$: is defined as the infimum of the Riemannian lengths of all admissible curves in $\Omega(\mathbf{x}_1,\mathbf{x}_2)$ connecting two points $\mathbf{x}_1$ and $\mathbf{x}_2$ on the Riemannian manifold:
  \begin{equation}
      d_g(\mathbf{x}_1, \mathbf{x}_2) = \inf_{\gamma \in \Omega(\mathbf{x}_1,\mathbf{x}_2)} L(\gamma).
  \end{equation}
  \item \textbf{Geodesic Ball} $B_r(\mathbf{x})$: Is formally defined as the geodesic ball centered at a given point $\mathbf{x}$ with a radius $r>0$:
  \begin{equation}
  B_r(\mathbf{x}) = \{ \hat{\mathbf{x}} \in \mathcal{M} \mid d_g(\mathbf{x}, \hat{\mathbf{x}}) \le r \}.
    \end{equation}

   % \item \textbf{$\mathcal{V}_{\mathbf{X}}(r)$ Inverse Function}  $\mathcal{V}_{\mathbf{X}}^{-1}(\cdot)$: is defined as the inverse function of $\mathcal{V}_{\mathbf{X}}(r)$, which is global existence and uniqueness.
\end{itemize}

Beside the basic structure, the manifold $\mathcal{M}$ should be connected and compact, meaning that $\forall \mathbf{x}_1,\mathbf{x}_2 \in \mathcal M$, the geodesic distance is attained and
\begin{align}
    d_g(\mathbf{x}_1,\mathbf{x}_2)<\infty, \quad V_g(\mathcal M)<\infty \quad \forall\mathbf{x}_1,\mathbf{x}_2 \in \mathcal M
\end{align}
The compactness assumption is motivated by the bounded range of practical natural data, while the connectedness can be imposed without loss of generality by restricting attention to a connected component of the data space.

With the mathematical definition, we can now define the probability density of natural data ${\mathbf{X}}$ by the Radon-Nikodym derivative as 
\begin{align}
    f(\mathbf x):=\frac{dP_{{\mathbf{X}}}}{dV_g}(\mathbf x)
\end{align}
with its Lebesgue integral $\int_{\mathcal M}f(\mathbf x) dV_g(\mathbf x)=1$, where $P_{{\mathbf{X}}}$ is a probability measure on $\mathcal M$ that is absolutely continuous with respect to the Riemannian volume measure $V_g$.

In conclusion, in this paper, we consider the case that the support of source $\mathbf{X}$ is a connected and compact Riemannian manifold and its density is therefore naturally defined on the manifold by Radon-Nikodym derivative. 
Moreover, it is obvious that the quantities related to the probability should be defined on the manifold $\mathcal{M}$ by Lebesgue integral with respect to the Riemannian volume measure $V_g$. For example, the differential entropy of ${\mathbf{X}}$ is defined as
\begin{align}
    h({\mathbf{X}}) = -\int_{\mathcal{M}} f(\mathbf{x}) \log_2 f(\mathbf{x}) dV_g.
\end{align}
%Therefore, all subsequent probability and  information-theoretic quantities 
\begin{remark}
    This Riemannian manifold model directly characterizes the geometry of the source support, rather than treating the source merely as a random variable in the ambient Euclidean space. Moreover, it generalizes the conventional Euclidean source model, since the Euclidean space is itself a special case of a Riemannian manifold.
\end{remark}

\subsection{Generative Communications Model and Problem Definition}
Let $\{\mathbf{X}_t\}_{t=1}^{\infty}$ denote a stationary and memoryless source with marginal distribution $P_{\mathbf{X}}$ on a connected and compact manifold $\mathcal{M}$. A stochastic encoder maps a length-$n$ source sequence $\mathbf{X}^n$ to a message $E$ in a binary prefix code $\mathcal{E}$ through the conditional distribution $P_{E|\mathbf{X}^n}$. At the receiver, a generator samples a length-$n$ sequence $\hat{\mathbf{X}}^n$ according to the conditional distribution $P_{\hat{\mathbf{X}}^n|E}$.

We next analyze the characteristics of generative communication to identify the requirements that the theoretical framework should address. First, the generator is modeled to approximate the conditional source distribution given the transmitted information. The theoretical framework should therefore incorporate a conditional-distribution constraint. Under suitable regularity conditions, modern conditional generative models, such as conditional diffusion models, provide theoretical guarantees that the learned conditional distribution can approximate the target conditional distribution arbitrarily well as the amount of training data and the model capacity increase, including in settings involving Riemannian manifolds \cite{tang2025conditional,ishtiaque2026universality}. This approximation capability is independent of the communication constraints. Accordingly, we consider the idealized case in which the generator perfectly learns the conditional source distribution, meaning that
\begin{align}
P_{\hat{\mathbf{X}}^n|E}=P_{{\mathbf{X}}^n|E} \quad \text{almost surely (a.s.)}.
\end{align}

Second, conditional-distribution preservation also implies that both the source sample and the generated sample lie on the same low-dimensional manifold. We therefore measure fidelity using the geodesic distance $d_g(\mathbf{x},\hat{\mathbf{x}})$ induced by the Riemannian metric. Moreover, generative communication differs from conventional deterministic reconstruction in that the received message specifies a distribution rather than a unique reconstruction. The same message may lead to different generated samples, and the resulting distortion depends on both the source realization and the private randomness at the receiver. To characterize the reliability of stochastic generation controlled by the transmitted information, we define the geodesic excess-distortion probability as
\begin{align}
P(r):=\Pr\{d_g(\mathbf{X},\hat{\mathbf{X}})\ge r\},
\end{align}
which quantifies the probability that the geodesic distortion is at least a prescribed radius $r$.

These observations motivate the following operational formulation of the fundamental limits of generative communication.
\begin{definition}\label{def:achieveable_rate}
For a radius $r>0$ and a failure probability $0\le\epsilon<1$, a rate $R$ is said to be asymptotically achievable under conditional-distribution preservation if there exists a sequence of stochastic block codes indexed by $n$, each consisting of a stochastic encoder $P_{E|\mathbf{X}^n}$ and a stochastic decoder $P_{\hat{\mathbf{X}}^n|E}$, satisfying
\begin{align} 
&\limsup_{n\to\infty}\frac{1}{n}\mathbb E[\ell(E)]\le R. \\
&\limsup_{n\to\infty}\frac{1}{n}\sum_{t=1}^{n}\Pr\{d_g({\mathbf{X}}_t,\hat{{\mathbf{X}}}_t)\ge r\}\le\epsilon. \\
&P_{\hat{{\mathbf{X}}}^n|E}=P_{{\mathbf{X}}^n|E}\qquad P_{E}\text{-a.s.} 
\end{align}
where $\ell(E)$ denotes the length of $E$. The asymptotic operational GRDP function is defined as
\begin{align} R_{\mathrm C}(r,\epsilon):=\inf\{R\ge0:R\text{ is achievable for }(r,\epsilon)\}. \end{align}
\end{definition}

To characterize $R_{\mathrm C}(r,\epsilon)$, we introduce the corresponding informational GRDP function.
\begin{definition}
\label{def_generative_rate_distortion}
For a radius $r>0$ and a failure probability $0\le\epsilon<1$, the informational GRDP function is defined as
\begin{align}
R(r,\epsilon):=&\inf_{P_{\hat U|{\mathbf{X}}}, P_{\hat{\mathbf{X}}|\hat U}}I({\mathbf{X}};\hat U)\\
\text{s.t.} \quad &{\mathbf{X}}-\hat U-\hat {\mathbf{X}},\\
&\Pr\{d_g({\mathbf{X}},\hat {\mathbf{X}})\ge r\}\le\epsilon,\\
&P_{\hat {\mathbf{X}}|\hat U}=P_{{\mathbf{X}}|\hat U}\qquad P_{\hat U}\text{-a.s.}.
\end{align}
where the auxiliary random variable $\hat{U}$ is a representation of $\mathbf{X}$ and serves as the single-letter counterpart of the message $E$ in Definition \ref{def:achieveable_rate}.
\end{definition}

\section{Main Results}
In this section, we first establish the equivalence between the asymptotic operational and informational GRDP functions, yielding a single-letter characterization. We then derive lower and upper bounds on this characterization for a prescribed geodesic distortion radius $r$ in the limit of vanishing failure probability.

\begin{theorem}\label{theorem:coding_theorem}
For $r>0$ and $0<\epsilon<1$,
% Assume $(\mathcal{M}, g)$ be a compact and connected manifold, for $r>0$ and $0<\epsilon<1$,
\begin{align} R_{\mathrm C}(r,\epsilon) = R(r,\epsilon). \end{align}
\end{theorem}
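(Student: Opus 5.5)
We prove the two inequalities $R_{\mathrm C}(r,\epsilon)\ge R(r,\epsilon)$ (converse) and $R_{\mathrm C}(r,\epsilon)\le R(r,\epsilon)$ (achievability) separately. The approach adapts the conditional-distribution-preserving argument of \cite{salehkalaibar2024rdp}, which uses the posterior reference map of \cite{atif2024posterior}, to geodesic excess-distortion on $\mathcal M$. Before either direction, we record two structural facts about $R(r,\epsilon)$. First, it is nonincreasing in $\epsilon$. Second, it is convex in $\epsilon$: time-share two feasible pairs $(P_{\hat U|\mathbf X},P_{\hat{\mathbf X}|\hat U})$ through an auxiliary index $Q$ absorbed into $\hat U'=(\hat U,Q)$. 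Conditional preservation given $\hat U'$ survives this mixture because $Q$ is part of the representation. Hence $R(r,\cdot)$ is continuous on $(0,1)$. This continuity is why the theorem excludes $\epsilon=0$.

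\textbf{Converse.} Fix an achievable code with rate $R$. We use the prefix property and take $\hat U_t:=(E,T)$ with a time-sharing variable $T$ uniform on $\{1,\dots,n\}$. The chain of bounds is
\begin{align}
\mathbb E[\ell(E)]\ge H(E)\ge I(\mathbf X^n;E)\ge \sum_{t=1}^n I(\mathbf X_t;E),
\end{align}
where the last step uses that the source is memoryless. The per-letter constraints then follow directly from the code. The Markov chain $\mathbf X_t-E-\hat{\mathbf X}_t$ holds by construction. The marginalized identity $P_{\hat{\mathbf X}_t|E}=P_{\mathbf X_t|E}$ follows from $P_{\hat{\mathbf X}^n|E}=P_{\mathbf X^n|E}$. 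Define $\epsilon_t:=\Pr\{d_g(\mathbf X_t,\hat{\mathbf X}_t)\ge r\}$. Then $I(\mathbf X_t;E)\ge R(r,\epsilon_t)$. Convexity gives $\frac1n\sum_t R(r,\epsilon_t)\ge R(r,\bar\epsilon_n)$ with $\bar\epsilon_n=\frac1n\sum_t\epsilon_t$. Take $n\to\infty$: we have $\limsup_n\bar\epsilon_n\le\epsilon$, and $R(r,\cdot)$ is monotone and continuous at $\epsilon\in(0,1)$. Together these yield $R\ge R(r,\epsilon)$.

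\textbf{Achievability.} This is the main obstacle, for two reasons. $\mathbf X$ is continuous on a manifold, and the decoder must reproduce $P_{\mathbf X^n|E}$ exactly, without common randomness. Fix $\delta>0$ and a feasible pair for $(r,\epsilon-\delta)$ with $I(\mathbf X;\hat U)\le R(r,\epsilon-\delta)+\delta$. We proceed in three steps.

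\emph{Step 1 (discretization).} Since $\mathcal M$ is compact, we can first replace $\hat U$ by a finite-valued quantized version. Data processing keeps the mutual information no larger. We must also keep exact conditional preservation, so we redefine the decoder as $P_{\mathbf X|\hat U_q}$. The excess-distortion probability then changes only slightly. To justify this we would use weak continuity: approximate $\hat U$ by a fine partition, and use that $\Pr\{d_g\ge r\}$ is controlled by an approximation at radius $r-\eta$ with a small probability loss. Before that, we exploit slack in $r$ by noting that the feasible set at $(r,\epsilon-\delta)$ embeds into an $(r',\epsilon)$ problem. If this step fails, we would instead try a direct argument using $\Pr\{d_g\ge r\}$ as an upper semicontinuous functional under weak convergence.

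\emph{Step 2 (encoding).} With finite $\hat U$, we use a likelihood-encoder / soft-covering construction. Generate $2^{n(I+\delta)}$ i.i.d.\ codewords $\hat U^n(m)\sim P_{\hat U}^{\otimes n}$. The encoder selects $m$ with probability proportional to $P_{\mathbf X^n|\hat U^n}(\mathbf x^n|\hat U^n(m))$, and the index is sent with a prefix code of rate $\approx I+\delta$. Soft covering makes the induced joint law of $(\mathbf X^n,\hat U^n(M))$ close in total variation to the ideal i.i.d.\ law. Since $\mathbf X$ may be continuous, the soft-covering lemma is applied to densities with respect to $V_g$, which is fine because $\hat U$ is finite.

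\emph{Step 3 (decoding and distortion).} The decoder outputs $\hat{\mathbf X}^n\sim P_{\mathbf X^n|M}$ under the true code distribution. This is the posterior reference map, and it gives exact preservation by definition. The remaining task is to bound the excess-distortion probability. Because $\hat{\mathbf X}^n$ and $\mathbf X^n$ are conditionally i.i.d.\ given $M$, the joint law of $(\mathbf X_t,\hat{\mathbf X}_t)$ is close in total variation to that under the ideal i.i.d.\ code. A total-variation perturbation of the posterior propagates to the pair law with at most a constant-factor blow-up. Hence $\frac1n\sum_t\Pr\{d_g\ge r\}\le\epsilon-\delta+o(1)$. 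Finally let $\delta\to0$ and use continuity of $R(r,\cdot)$ to conclude.
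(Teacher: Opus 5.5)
Your converse is essentially the paper's argument. It checks per-letter feasibility of $(\mathbf X_t,E,\hat{\mathbf X}_t)$ and then uses convexity (time sharing), monotonicity and continuity of $R(r,\cdot)$. Continuity of a convex function on $(0,1)$ needs finiteness, which you do not establish. The paper gets it from $R(r,0)\le\log_2 N$, taking $\hat U$ to be the cell index of a finite partition of $\mathcal M$ into sets of diameter $<r$.

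Your achievability takes a genuinely different route. The paper quantizes the \emph{source}: it replaces $\mathbf X$ by a cell index $Y$ and replaces $\mathbf 1\{d_g\ge r\}$ by a robust cell-level loss $\Delta_{r,k}$ that dominates it and converges to it pointwise. It then uses the support lemma to make the auxiliary finite while exactly preserving $P_Y$, $H(Y|\hat U)$ and the expected loss, which is a continuous function $c(q_u)$ of the posterior. Finally it applies the cited finite-alphabet posterior-reference lemma and lifts back to $\mathcal M^n$ through $P_{\mathbf X|Y}$, which keeps conditional preservation exact. You instead keep the source continuous, make the \emph{auxiliary} finite by merging, and run a likelihood encoder with soft covering and posterior resampling directly on $\mathcal M^n$.

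The paper's route relies on an off-the-shelf finite-alphabet lemma and never merges posteriors, since the support lemma selects actual posteriors $q_u$. Yours avoids the robust loss and the lifting. The price is a soft-covering lemma on a general alphabet and a real approximation step for $\hat U$. The soft-covering part is harmless here: with finite $\hat U$ one has $dQ_u/dP_{\mathbf X}\le 1/P_{\hat U}(u)$, so the information density is bounded. Your Steps 2--3 are sound. Comparing $P_M\,P_{\mathbf X^n|M}\otimes P_{\mathbf X^n|M}$ with the ideal law costs only a constant factor in total variation. Both the soft-covering error and the ideal-code distortion are controlled in expectation over the random codebook, so a single deterministic codebook exists, as required without common randomness.

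Step 1, however, is not yet a proof. A ``fine partition of $\hat U$'' has no meaning for an abstract auxiliary unless it partitions the posterior map $u\mapsto Q_u:=P_{\mathbf X|\hat U=u}$ into finitely many cells that are small in a metric metrizing weak convergence on $\mathcal P(\mathcal M)$, e.g.\ $W_1$ with respect to $d_g$. Such a partition exists because $\mathcal P(\mathcal M)$ is weakly compact.

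This matters because the merged decoder samples from the mixture $\bar Q_c$. The new pair law is therefore $\sum_c P(c)\,\bar Q_c\otimes\bar Q_c$, which contains cross terms $Q_u\otimes Q_{u'}$. Grouping values of $\hat U$ whose posteriors are far apart inflates the excess-distortion probability. In the extreme of a single cell it becomes $\iint\mathbf 1\{d_g\ge r\}\,dP_{\mathbf X}\,dP_{\mathbf X}$.

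With cells of $W_1$-diameter at most $\eta$, test against $1$-Lipschitz functions on $\mathcal M\times\mathcal M$. This shows the new and old pair laws are within $W_1$ distance $\eta$. The Portmanteau theorem for the closed set $\{d_g\ge r\}$ then gives $\limsup_{\eta\downarrow0}\Pr_\eta\{d_g(\mathbf X,\hat{\mathbf X})\ge r\}\le\epsilon-\delta$. So your fallback is the correct argument and should be the main one.

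The ``slack in $r$'' is legitimate only for the fixed scheme, where it is the same upper semicontinuity via $\Pr\{d_g\ge r-\eta\}\downarrow\Pr\{d_g\ge r\}$. It cannot be used at the level of $R(\cdot,\epsilon)$, for which no continuity in $r$ is available. The slack you actually need is the one in $\epsilon$, which you already have.
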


\begin{proof}
The complete proof is provided in Appendix \ref{app:coding_theorem}. The converse follows from a standard time-sharing argument. For achievability, we first quantize the compact manifold into a sufficiently fine finite partition and introduce a robust finite-alphabet threshold loss that asymptotically preserves the original distortion constraint. The support lemma is then used to replace the possibly nonfinite auxiliary variable by a finite one without changing the relevant mutual information and expected loss. A finite-alphabet posterior-reference map lemma is subsequently applied to construct a block code satisfying the conditional-distribution constraint, and the finite-alphabet reconstruction is finally lifted back to the manifold. 
\end{proof}

%--------------------- 下界证明 --------------------------------- %
Theorem~\ref{theorem:coding_theorem} establishes the operational significance of $R(r,\epsilon)$: every rate strictly above it is asymptotically achievable by the generative communications. With the theorem, we concentrate on the single-letter characterization with $\epsilon\downarrow0$ and define
\begin{align}
R(r)=\lim_{\epsilon \downarrow0} R(r, \epsilon).
\end{align}

We next derive a lower bound on $R(r)$. To this end, we first establish the following proposition.
\begin{proposition}\label{pro:constraint_covering}
Let $\mathcal{P}(\mathcal{M})$ denote the space of Borel probability measures on $\mathcal{M}$, and let $\mathcal B(\mathcal M)$ denote
the Borel $\sigma$-algebra of $\mathcal M$. For a fixed $r>0$,
\begin{align}
\liminf_{\tau\downarrow0}\psi_r(\tau)\ge-\log_2\beta_r,
\end{align}
where, for $\tau>0$,
\begin{align}
&\psi_r(\tau):=\inf_{Q \in Q_r(\tau)}D_{KL}(Q\|P_{\mathbf X}).\\
&Q_r(\tau)=\left\{Q\in\mathcal P(\mathcal M): \int_{\mathcal M}\int_{\mathcal M}\mathbf 1 \left\{d_g(\mathbf x, \hat{\mathbf{x}})\ge r\right\}Q(d\mathbf x)Q(d \hat{\mathbf{x}})\le\tau\right\}\\
&\beta_r:=\sup_{\substack{A\in\mathcal B(\mathcal M): \operatorname{diam}_g(A)\le r}}P_{\mathbf X}(A),\\
&\operatorname{diam}_g(A):=\max_{\mathbf{x},\hat{\mathbf{x}}\in\mathcal A}d_g(\mathbf \mathbf{x},\hat{\mathbf x}).
\end{align}
\end{proposition}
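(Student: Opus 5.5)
The plan is a compactness argument. Take a near-optimal sequence of measures as $\tau\downarrow0$, pass to a weak limit, and show that the limit is concentrated on a set of geodesic diameter at most $r$. A data-processing bound then ties the divergence to $\beta_r$. First fix a sequence $\tau_k\downarrow0$ along which $\psi_r(\tau_k)\to\liminf_{\tau\downarrow0}\psi_r(\tau)=:L$. If $L=\infty$ there is nothing to prove, so assume $L<\infty$. Choose $Q_k\in Q_r(\tau_k)$ with $D_{KL}(Q_k\|P_{\mathbf X})\le\psi_r(\tau_k)+1/k$. Then $\sup_k D_{KL}(Q_k\|P_{\mathbf X})<\infty$.

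Since $\mathcal M$ is compact, $\mathcal P(\mathcal M)$ is weakly compact by Prokhorov's theorem. Passing to a subsequence, $Q_k\to Q^\ast$ weakly. Two standard facts are then used.

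First, $D_{KL}(\cdot\|P_{\mathbf X})$ is weakly lower semicontinuous, for example via the Donsker--Varadhan variational formula as a supremum of continuous functionals. This gives $D_{KL}(Q^\ast\|P_{\mathbf X})\le L$.

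Second, weak convergence $Q_k\to Q^\ast$ implies $Q_k\otimes Q_k\to Q^\ast\otimes Q^\ast$ weakly on $\mathcal M\times\mathcal M$. It suffices to test against products of continuous functions, which are dense in $C(\mathcal M\times\mathcal M)$ by Stone--Weierstrass. Because $d_g$ is continuous, the set $G:=\{(\mathbf x,\hat{\mathbf x}):d_g(\mathbf x,\hat{\mathbf x})>r\}$ is open. The portmanteau theorem then yields
\begin{align}
(Q^\ast\otimes Q^\ast)(G)\le\liminf_{k\to\infty}(Q_k\otimes Q_k)(G)\le\liminf_{k\to\infty}\tau_k=0 .
\end{align}
Here the constraint defining $Q_r(\tau_k)$ uses $\{d_g\ge r\}\supseteq G$.

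Next, let $A:=\operatorname{supp}Q^\ast$. This is a closed, hence Borel and compact, set with $Q^\ast(A)=1$. Suppose there were $\mathbf x,\hat{\mathbf x}\in A$ with $d_g(\mathbf x,\hat{\mathbf x})>r$. By continuity of $d_g$ there are open neighbourhoods $U\ni\mathbf x$ and $V\ni\hat{\mathbf x}$ with $U\times V\subseteq G$. Both have positive $Q^\ast$-mass because their centres lie in the support. Hence $(Q^\ast\otimes Q^\ast)(G)\ge Q^\ast(U)Q^\ast(V)>0$, a contradiction. Thus $\operatorname{diam}_g(A)\le r$, where the maximum is attained by compactness, so $A$ is admissible in the definition of $\beta_r$. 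Note that the strict inequality in $G$ exactly matches the non-strict bound $\operatorname{diam}_g\le r$ in $\beta_r$. This is why the open set $G$, rather than the closed set $\{d_g\ge r\}$, is the right one for the portmanteau step.

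Finally, apply the data-processing inequality for KL divergence to the two-cell partition $\{A,A^c\}$. Since $Q^\ast(A)=1$, this gives
\begin{align}
D_{KL}(Q^\ast\|P_{\mathbf X})\ge 1\cdot\log_2\frac{1}{P_{\mathbf X}(A)}\ge-\log_2\beta_r .
\end{align}
If $P_{\mathbf X}(A)=0$, the divergence is infinite, which contradicts $L<\infty$. Combining with the first step gives $L\ge-\log_2\beta_r$.

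I expect the main technical obstacle to be the limit passage. This covers lower semicontinuity of the divergence together with weak convergence of the product measures $Q_k\otimes Q_k$. The key point is that the vanishing-collision constraint survives in the limit only for the open set $\{d_g>r\}$, and this is exactly what the support-diameter argument needs.
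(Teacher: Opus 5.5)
Your proposal is correct and follows essentially the same route as the paper. Both arguments take a near-optimal sequence $Q_k$ along $\tau_k\downarrow0$, extract a weakly convergent subsequence by compactness, and apply the portmanteau theorem to the open set $\{d_g>r\}$ under the product measures. They then use the support-diameter contradiction, the data-processing bound $D_{KL}(Q^\ast\|P_{\mathbf X})\ge-\log_2P_{\mathbf X}(\operatorname{supp}Q^\ast)$, and weak lower semicontinuity of relative entropy. Your extra justifications fill in steps the paper only asserts: Stone--Weierstrass for $Q_k\otimes Q_k\to Q^\ast\otimes Q^\ast$, Donsker--Varadhan for lower semicontinuity, and the separate treatment of the case $P_{\mathbf X}(A)=0$.
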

\begin{proof}
    The proof is detailed in Appendix \ref{app:constraint_covering}.
\end{proof}
As shown in Fig.\ref{fig:cost}, the proposition can be illustructed in geodesic that if a distribution $Q$ makes the probability that two independent samples drawn from $Q$ are separated by a geodesic distance of at least $r$ tends to zero, then the probability mass of $Q$ must asymptotically concentrate on a region whose geodesic diameter is no greater than $r$. Meanwhile, restricting $P_{\mathbf X}$ to such a region incurs a relative-entropy cost of at least $-\log\beta_r$.

 %Equipped with the geometric and probabilistic definitions, we now state our main theorem. This theorem establishes a lower bound on the information capacity required for a generative model to achieve a specific level of reconstruction fidelity on a curved manifold.
\begin{figure}
    \centering
    \includegraphics[width=0.5\textwidth]{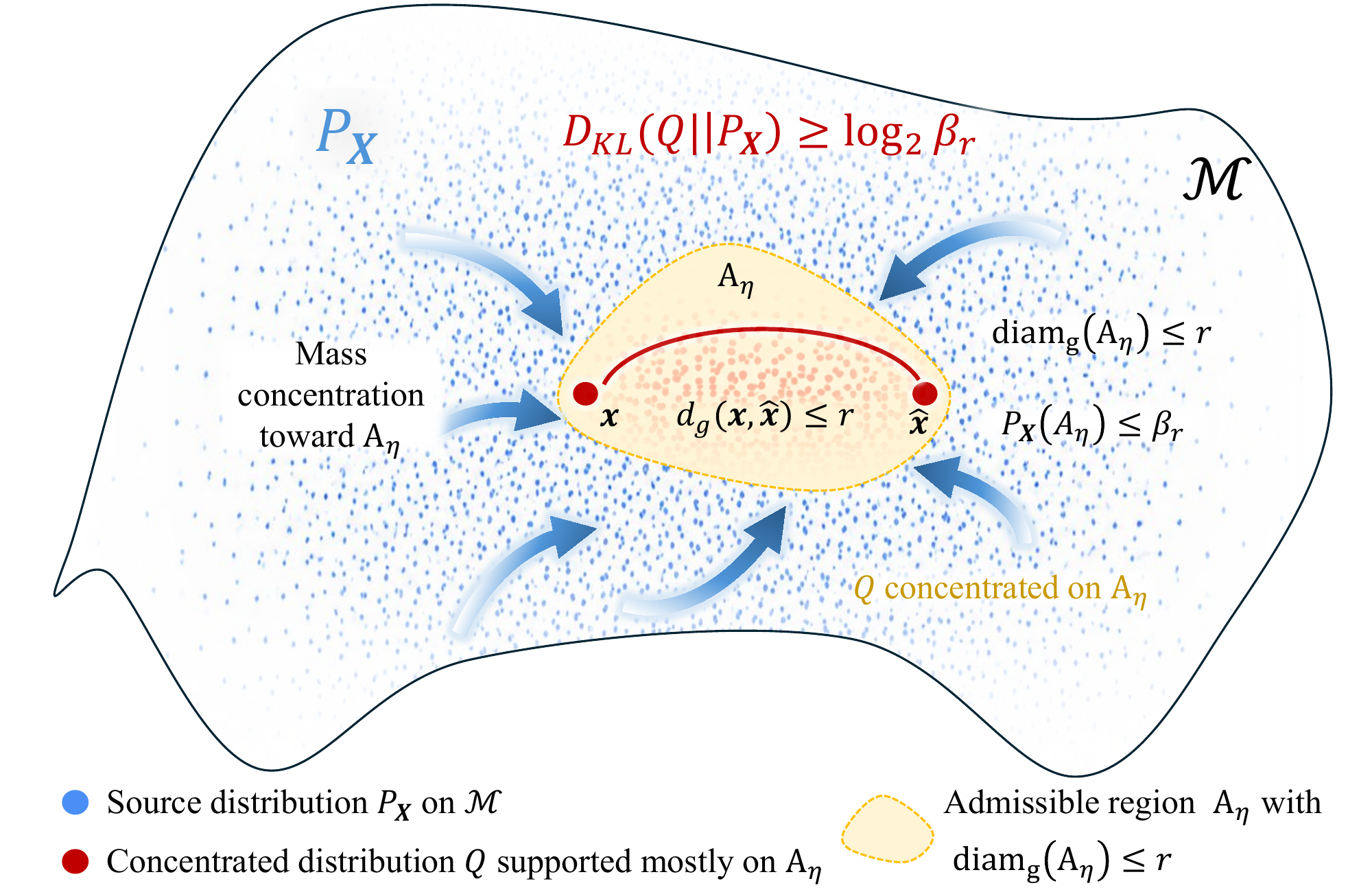}
    \caption{Geometric interpretation of Proposition \ref{pro:constraint_covering}}
    \label{fig:cost}
\end{figure}

 With the lemma, we can derive the lower bound of $R(r)$.
 \begin{theorem}[Lower Bound]
    \label{thm_generative_rate_distortion_converse}
For $r>0$, 
    \begin{align}
        R(r)\ge -\log_2\beta_r.
    \end{align}
\end{theorem}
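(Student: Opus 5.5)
The plan is to decompose the mutual information as an average of relative entropies of posteriors, and then apply Proposition~\ref{pro:constraint_covering} to each posterior. Fix $\epsilon>0$ and any feasible pair $(P_{\hat U|\mathbf X},P_{\hat{\mathbf X}|\hat U})$ for $R(r,\epsilon)$. Write $Q_u:=P_{\mathbf X|\hat U=u}$. Then
\begin{align}
I(\mathbf X;\hat U)=\int D_{KL}(Q_u\|P_{\mathbf X})\,P_{\hat U}(du).
\end{align}
By the Markov chain $\mathbf X-\hat U-\hat{\mathbf X}$ and the perception constraint $P_{\hat{\mathbf X}|\hat U}=Q_u$, the pair $(\mathbf X,\hat{\mathbf X})$ given $\hat U=u$ is distributed as $Q_u\otimes Q_u$. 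Hence the distortion constraint reads
\begin{align}
\int \tau(u)\,P_{\hat U}(du)\le\epsilon,\qquad \tau(u):=\iint \mathbf 1\{d_g(\mathbf x,\hat{\mathbf x})\ge r\}\,Q_u(d\mathbf x)Q_u(d\hat{\mathbf x}).
\end{align}
By definition, $Q_u\in Q_r(\tau(u))$, so $D_{KL}(Q_u\|P_{\mathbf X})\ge\psi_r(\tau(u))$. (We set $\psi_r(0)$ as the infimum over $Q_r(0)$; it is at least $\psi_r(\tau)$ for every $\tau>0$ since $Q_r(\tau)$ is increasing in $\tau$.)

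Next, we localize with Markov's inequality. Fix $\delta>0$. By Proposition~\ref{pro:constraint_covering}, there is $\tau_0>0$ such that $\psi_r(\tau)\ge-\log_2\beta_r-\delta$ for all $0<\tau\le\tau_0$; by the monotonicity noted above, the same holds at $\tau=0$. Markov's inequality gives $P_{\hat U}\{\tau(\hat U)>\tau_0\}\le\epsilon/\tau_0$. Since $D_{KL}\ge0$ on the bad set,
\begin{align}
I(\mathbf X;\hat U)\ge\Bigl(1-\frac{\epsilon}{\tau_0}\Bigr)\bigl(-\log_2\beta_r-\delta\bigr).
\end{align}
Taking the infimum over feasible schemes bounds $R(r,\epsilon)$ below by the right-hand side. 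Letting $\epsilon\downarrow0$ with $\tau_0$ fixed, and then $\delta\downarrow0$, yields $R(r)\ge-\log_2\beta_r$. The limit defining $R(r)$ exists because $R(r,\epsilon)$ is nonincreasing in $\epsilon$.

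The main obstacle is technical rather than conceptual, namely measurability of $u\mapsto Q_u$ and $u\mapsto\tau(u)$. I would handle this by taking a regular conditional distribution, which exists since $\mathcal M$ is a compact metric, hence Polish, space. Then $\tau(u)$ is measurable as the integral of a Borel indicator against the product kernel. The chain rule $I(\mathbf X;\hat U)=\mathbb E[D_{KL}(Q_{\hat U}\|P_{\mathbf X})]$ holds for general $\hat U$ via the standard disintegration of relative entropy.

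One further case needs care: if $\beta_r=0$, the bound is $+\infty$. This case cannot arise, since $\mathcal M$ has positive volume and small balls have positive $P_{\mathbf X}$-mass for some center. Even so, the argument above covers it, because Proposition~\ref{pro:constraint_covering} then gives $\psi_r\to\infty$ and the same Markov step forces $I(\mathbf X;\hat U)\to\infty$.
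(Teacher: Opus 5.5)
Your proposal is correct and follows essentially the same route as the paper: you write $I(\mathbf X;\hat U)$ as the average of $D_{KL}(Q_u\|P_{\mathbf X})$, use the conditional product structure $Q_u\otimes Q_u$, localize with Markov's inequality to the set where the pairwise excess probability is small, and then invoke Proposition~\ref{pro:constraint_covering}. The only difference is cosmetic: the paper ties the threshold to $\epsilon$ through $\tau=\sqrt{\epsilon}$, whereas you fix $\tau_0$ (depending on $\delta$) and send $\epsilon\downarrow0$ before $\delta\downarrow0$; your added remarks on measurability and on $\tau(u)=0$ are harmless refinements.
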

\begin{proof}
    The proof is detailed in Appendix \ref{app:generative_rate_distortion_converse}. %Briefly, we decompose the reconstruction uncertainty according to whether the geodesic distortion is below or above the prescribed radius. The maximum-entropy principle then bounds the conditional entropy in terms of the volume of the corresponding geodesic ball, while the contribution of the distortion-violation event vanishes as $\epsilon$ tends to zero. Finally, the marginal-preservation constraint together with the data-processing inequality converts this entropy bound into the desired lower bound on the required rate.
\end{proof}

Although the lower bound above provides a general characterization, its geometric implications remain somewhat implicit. To more explicitly describe the influence of the intrinsic geometry of source, we next relax the lower bound by enclosing the admissible region within a larger geodesic ball and then employ the small-radius Taylor expansion of the geodesic-ball volume.

For any Borel set $A\subseteq\mathcal M$ satisfying
\begin{align}
\operatorname{diam}_g(A)\le r,
\end{align}
we have
\begin{align}
A\subseteq B_r(\mathbf x),
\end{align}
where $\mathbf{x}\in A$. Therefore,
\begin{align}
P_{\mathbf X}(A)\le P_{\mathbf X}\left(B_r(\mathbf x)\right)\le\sup_{\mathbf{x}'\in\mathcal M}P_{\mathbf X}\left(B_r(\mathbf{x}')\right).
\end{align}
Let $f_{max}=||f_(\mathbf x)||_{\infty}$. Taking the supremum over all Borel sets $A$ satisfying $\operatorname{diam}g(A)\le r$ gives
\begin{align}
\beta_r\le\sup_{\mathbf x\in\mathcal M}P_{\mathbf X}\left(B_r(\mathbf x)\right).
\end{align}
Consequently, we obtain the relaxed ball-based lower bound
\begin{align}
R(r)&\ge-\log_2\sup_{\mathbf x\in\mathcal M}P_{\mathbf X}\left(B_r(\mathbf x)\right)
=-\log_2{\sup_{\mathbf{x}\in \mathcal{M}}\int_{B_r(\mathbf x)}f(\mathbf x)V_g(d{\mathbf x})}
\ge -\log_2 \left[f_{max} \sup_{\mathbf{x}\in \mathcal{M}}{V_g\left(B_r(\mathbf x)\right)}\right].
\end{align}
Then, when $r$ is small enough, the volume of a geodesic ball admits the expansion
\begin{align}
V_g(B_r(\mathbf x))=\kappa_d r^d\left(1-\frac{\operatorname{Scal}_g(\mathbf x)}{6(d+2)}r^2+O(r^4)\right),
\end{align}
where $\operatorname{Scal}_g(\mathbf x)$ is the scalar curvature at the point $\mathbf{x}$, and 
\begin{align}
\kappa_d=\frac{\pi^{d/2}}{\Gamma\left(\frac d2+1\right)}.
\end{align}
Therefore, 
\begin{align}\label{eq:low_bound_rate}
R(r)\ge-\log_2\left\{f_{max}\kappa_d r^d\left[1-\frac{S_{min}}{6(d+2)}r^2+O(r^4)\right]\right\}.
\end{align}
% Now, if we concentrate on the leading term with respect to $r^d$, we can get 
% \begin{align}\label{eq:leading_term}
% r \gtrsim C\ 2^{-\frac{1}{d}R(r)},
% \end{align}
% where 
% \begin{align}
%     C=(f_{max}\kappa_d)^{-\frac{1}{d}},
% \end{align}
% and $\gtrsim$ represents 
% \begin{align}
%     \liminf_{r\downarrow0} \frac{r}{C2^{-\frac{1}{d}R(r)}}\ge 1
% \end{align}
% Furthermore, if the first curvature correction is further retained, we have 
% \begin{align}\label{eq:curvature_term}
% r \ge C\ 2^{-\frac{1}{d}R(r)}\left[1+\frac{S_{min}}{6d(d+2)}r^2+O(r^4)\right],
% \end{align}
% Equation \eqref{eq:leading_term} reveals the fundamental influence of intrinsic geometry on the required rate that a larger intrinsic dimension $d$ weakens the exponential decay of the lower bound, indicating a higher rate requirement for achieving the same geometric precision in higher-dimensional manifolds. Meanwhile, the curvature correction in \eqref{eq:curvature_term} characterizes that positive scalar curvature reduces the volume of small geodesic balls, thereby increasing the required rate, whereas negative scalar curvature enlarges local volumes and reduces the required rate.

\begin{remark}[Small-radius behavior]
The lower bound on rate in \eqref{eq:low_bound_rate} reveals rate-radius relationship when $r$ is small enough. In particular, the leading-order
bound is
\begin{align}\label{eq:leading_term}
r\gtrsim C\,2^{-R(r)/d},
\qquad
C=(f_{\max}\kappa_d)^{-1/d},
\end{align}
where $\gtrsim$ means that
\begin{align}
\liminf_{r\downarrow0}
\frac{r}{C\,2^{-R(r)/d}}\ge1.
\end{align}

This expression identifies the intrinsic dimension $d$
as the coefficient of the logarithmic growth of the rate
lower bound with inverse radius. Equivalently, $d$
governs the exponential dependence of the radius lower
bound on rate. The second-order term further captures
the contribution of scalar curvature to this bound.
\end{remark}

We next derive an upper bound on the rate function to
investigate the tightness of the preceding lower bound.
To state this upper bound, we first introduce the
following definitions.

\begin{definition}[Admissible Regions]
\label{def_admissible_regions_coverage_functions}
For a prescribed radius $r>0$, define the family of admissible regions as
\begin{align}
\mathcal A_r:=\left\{A\in\mathcal B(\mathcal M):A\neq\varnothing,\;\operatorname{diam}_g(A)\le r\right\}.
\end{align}
\end{definition}

\begin{definition}[Coverage Functions]
Let $K\in\mathbb N$ be a positive integer, and let $A_1,\ldots,A_K\in\mathcal A_r$ be admissible regions. Let $J\in\{1,\ldots,K\}$ be a random index taking values in $\{1,\ldots,K\}$, with design weights $q_1,\ldots,q_K$
satisfying 
\begin{align}
\Pr(J=k)=q_k,\qquad q_k\ge 0,\qquad \sum_{k=1}^Kq_k=1.
\end{align}

For each $A\in\mathcal B(\mathcal M)$, let $\mathbf1_A$
denote its indicator function, defined by
\begin{align}
\mathbf 1_A(\mathbf x)=
\begin{cases}
1, & \mathbf x\in A,\\
0, & \mathbf x\notin A.
\end{cases}
\end{align}

The coverage function induced by
$\{A_k,q_k\}_{k=1}^K$ is the measurable function
$\alpha:\mathcal M\to[0,1]$ defined by
\begin{align}
\alpha(\mathbf x):=\Pr(\mathbf x\in A_J)=\sum_{k=1}^K q_k\mathbf 1_{A_k}(\mathbf x).
\end{align}
\end{definition}
\begin{definition}[Vertex-Packing Function Class]
Define $\operatorname{VP}_r$ as the class of all coverage
functions generated by finite mixtures of admissible
regions:
\begin{align}
\operatorname{VP}_r:=\left\{\alpha:
\begin{array}{l}\displaystyle\alpha(\mathbf x)=\sum_{k=1}^Kq_k\mathbf 1_{A_k}(\mathbf x)\text{ for some }K\in\mathbb N,\\
A_k\in\mathcal A_r,\;q_k\ge0,\;\displaystyle\sum_{k=1}^Kq_k=1
\end{array}
\right\}.
\end{align}

\end{definition}
\begin{definition}[$L^1(P_{{\mathbf{X}}})$-closure of coverage functions]
\label{def_coverage_function_closure}

Let $\overline{\operatorname{VP}}_r$ denote the closure
of $\operatorname{VP}_r$ in $L^1(P_{\mathbf X})$.
Specifically, a measurable function
$\alpha:\mathcal M\to[0,1]$ belongs to
$\overline{\operatorname{VP}}_r$ if and only if there
exists a sequence
$\{\alpha_n\}_{n\ge1}\subseteq\operatorname{VP}_r$
such that
\begin{align}
\lim_{n\to\infty}\mathbb E_{{\mathbf{X}}}\left[\left|\alpha_n({\mathbf{X}})-\alpha({\mathbf{X}})\right|\right]=0.
\end{align}
\end{definition}

% For $\alpha:\mathcal M\to[0,1]$, adopt the convention
% \begin{align}
% -\log_2 0:=+\infty.
% \end{align}
% Define the $\alpha$-coverage achievable functional by
% \begin{align}
% C_\alpha(r)
% :=
% \inf_{\alpha\in\operatorname{VP}_r}
% \mathbb E_{{\mathbf{X}}}
% \left[
% -\log_2\alpha({\mathbf{X}})
% \right].
% \end{align}
With these definitions, we can now state the general upper bound on $R(r)$.

\begin{theorem}[Upper Bound]
\label{thm_general_alpha_achievability}
Let $(\mathcal M,g)$ be a compact, connected Riemannian manifold. Then, for every $r>0$, the rate function satisfies
\begin{align}
R(r)\le\min_{\alpha\in\overline{\operatorname{VP}}_r}\mathbb E_{{\mathbf{X}}}\left[-\log_2\alpha({\mathbf{X}})\right].
\end{align}
\end{theorem}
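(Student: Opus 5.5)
The plan is to prove the bound first for $\alpha\in\operatorname{VP}_r$ with an explicit test channel in Definition~\ref{def_generative_rate_distortion} that is feasible for every $\epsilon\ge0$. Then we pass to the $L^1(P_{\mathbf X})$-closure by a limiting argument.

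\textbf{Step 1: construction for a finite mixture.} Fix $\alpha=\sum_{k=1}^K q_k\mathbf 1_{A_k}$ with $A_k\in\mathcal A_r$ and $\mathbb E[-\log_2\alpha(\mathbf X)]<\infty$, so that $\alpha>0$ $P_{\mathbf X}$-a.s. Define the auxiliary variable $\hat U\in\{1,\dots,K\}$ by the Bayes-type channel
\begin{align}
P_{\hat U|\mathbf X}(k|\mathbf x)=\frac{q_k\mathbf 1_{A_k}(\mathbf x)}{\alpha(\mathbf x)}.
\end{align}
The decoder samples $\hat{\mathbf X}\sim P_{\mathbf X|\hat U}$ independently of $\mathbf X$ given $\hat U$. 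This choice enforces both the Markov chain and the conditional-distribution constraint. Given $\hat U=k$, both $\mathbf X$ and $\hat{\mathbf X}$ lie in $A_k$ almost surely. Since $\operatorname{diam}_g(A_k)\le r$, this gives $d_g(\mathbf X,\hat{\mathbf X})\le r$ a.s.

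The distortion constraint is stated with ``$\ge r$''. To handle the boundary case $d_g=r$, I plan to apply the construction to admissible regions of diameter at most $r'<r$. Alternatively, one can note that $R(r,\epsilon)$ is monotone in $r$ and argue by continuity in $r$. This boundary issue needs care, but the first option seems simplest. If needed, I would argue that $\overline{\operatorname{VP}}_{r}$-optima can be approximated by elements of $\operatorname{VP}_{r'}$ with $r'\uparrow r$, perhaps by slightly shrinking the sets $A_k$ toward interior points.

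\textbf{Step 2: rate computation.} Compute
\begin{align}
I(\mathbf X;\hat U)=H(\hat U)-H(\hat U|\mathbf X).
\end{align}
We have $P_{\hat U}(k)=q_kP_{\mathbf X}(A_k)$, so
\begin{align}
I(\mathbf X;\hat U)=\mathbb E\left[\log_2\frac{P_{\hat U|\mathbf X}(\hat U|\mathbf X)}{P_{\hat U}(\hat U)}\right]=\mathbb E\left[\log_2\frac{1}{\alpha(\mathbf X)P_{\mathbf X}(A_{\hat U})}\right].
\end{align}
The factor $-\log_2 P_{\mathbf X}(A_k)$ appears here, so the naive bound overshoots. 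The fix is to choose the channel weights differently from the design weights. Set $P_{\hat U|\mathbf X}(k|\mathbf x)=w_k\mathbf 1_{A_k}(\mathbf x)/\sum_j w_j\mathbf 1_{A_j}(\mathbf x)$. Then $I(\mathbf X;\hat U)\le \mathbb E[D(P_{\hat U|\mathbf X}\|Q)]$ for any output law $Q$ (the golden formula). Taking $Q=q$ and $w=q$ gives
\begin{align}
I(\mathbf X;\hat U)\le\mathbb E\left[\sum_k \frac{q_k\mathbf 1_{A_k}}{\alpha}\log_2\frac{1}{\alpha}\right]=\mathbb E[-\log_2\alpha(\mathbf X)],
\end{align}
which is exactly the desired bound. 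So the key tool is the variational upper bound $I\le \mathbb E[D(P_{\hat U|\mathbf X}\|Q)]$ with $Q=q$.

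\textbf{Step 3: closure.} For $\alpha\in\overline{\operatorname{VP}}_r$, take $\alpha_n\to\alpha$ in $L^1(P_{\mathbf X})$. I need
\begin{align}
\liminf_{n\to\infty}\mathbb E[-\log_2\alpha_n(\mathbf X)]\le\mathbb E[-\log_2\alpha(\mathbf X)].
\end{align}
This is the main obstacle, because $-\log$ is unbounded near zero and $L^1$ convergence alone does not control it. My first attempt is to mix: replace $\alpha_n$ by $(1-\delta)\alpha_n+\delta\alpha_0$, where $\alpha_0$ is a fixed finite cover of $\mathcal M$ by admissible regions, which exists by compactness. This new function is still in $\operatorname{VP}_r$ and is bounded below by $\delta\min\alpha_0>0$. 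Pass to an a.s.-convergent subsequence and apply dominated convergence, using the bound $-\log_2(\cdot)\le -\log_2(\delta c)$. Then let $\delta\downarrow0$, using monotone convergence and concavity of $\log$.

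Finally, the minimum is attained. $\overline{\operatorname{VP}}_r$ is convex and closed, and the functional is convex and lower semicontinuous by Fatou's lemma. Compactness of the set in a weak sense, such as weak-$*$ compactness in $L^\infty$, should give existence. If that fails, the statement can be read with an infimum.
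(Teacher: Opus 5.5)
Your route is essentially the paper's. You use the same test channel $P_{\hat U|\mathbf X}(k|\mathbf x)=q_k\mathbf 1_{A_k}(\mathbf x)/\alpha(\mathbf x)$ with posterior resampling, and the same regularization $(1-\delta)\alpha_n+\delta\alpha_0$ by a finite cover to pass to the closure. You also use a weak-compactness plus convex lower-semicontinuity argument for attainment; your weak-$*$ version in $L^\infty(P_{\mathbf X})$ works once you note that a convex, strongly l.s.c.\ functional is weakly l.s.c. Your golden-formula bound $I(\mathbf X;\hat U)\le\mathbb E[D_{\rm KL}(P_{\hat U|\mathbf X}(\cdot|\mathbf X)\,\|\,\mathbf q)]=\mathbb E[-\log_2\alpha(\mathbf X)]$ has the same content as the paper's identity $I(\mathbf X;\hat U)=\mathbb E[-\log_2\alpha(\mathbf X)]-D_{\rm KL}(\mathbf p\|\mathbf q)$. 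One slip: $P_{\hat U}(k)$ is not $q_kP_{\mathbf X}(A_k)$ but $q_kz_k$ with $z_k=\int_{A_k}\alpha^{-1}\,dP_{\mathbf X}$. The ``overshoot'' you saw comes from that error, and your ``fix'' with $w=q$ is just the original channel.

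The genuine gap is the boundary case in Step~1, which you flag but do not close. The construction gives only $d_g(\mathbf X,\hat{\mathbf X})\le r$ a.s., whereas feasibility for $R(r,\epsilon)$ at every $\epsilon>0$ also needs $\Pr\{d_g(\mathbf X,\hat{\mathbf X})=r\}=0$. Neither of your remedies is established:
\begin{itemize}
\item Passing to $\operatorname{VP}_{r'}$, $r'<r$, requires $\inf_{\operatorname{VP}_{r'}}\mathbb E[-\log_2\alpha(\mathbf X)]\to\inf_{\operatorname{VP}_{r}}\mathbb E[-\log_2\alpha(\mathbf X)]$ as $r'\uparrow r$. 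You do not prove this, and it is not routine: admissible sets are arbitrary Borel sets, possibly with empty interior, and a manifold has no global dilations to ``shrink'' them.
\item Arguing by continuity in $r$ requires right-continuity of $R(\cdot)$ at $r$, since the construction by itself only certifies feasibility at radii $r''>r$.
\end{itemize}

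The missing idea is the paper's short argument, which uses the standing assumption $P_{\mathbf X}\ll V_g$ that your proposal never invokes. Given $\hat U=k$, $\mathbf X$ and $\hat{\mathbf X}$ are conditionally i.i.d.\ with law $P_{\mathbf X|\hat U=k}\ll P_{\mathbf X}\ll V_g$, so their joint conditional law is absolutely continuous with respect to $V_g\otimes V_g$. The set $\{(\mathbf x,\hat{\mathbf x}):d_g(\mathbf x,\hat{\mathbf x})=r\}$ is $V_g\otimes V_g$-null by Fubini, because each distance sphere is $V_g$-null: $d_g(\mathbf x,\cdot)$ is Lipschitz with unit gradient a.e., so its level sets are null by the coarea formula. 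Hence $\Pr\{d_g(\mathbf X,\hat{\mathbf X})\ge r\}=0$ at the original radius $r$, with no change to the admissible class.

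This ingredient is indispensable. Take a source uniform on two points $\mathbf a,\mathbf b$ with $d_g(\mathbf a,\mathbf b)=r$. Then $\mathbf 1_{\{\mathbf a,\mathbf b\}}\in\operatorname{VP}_r$ has zero cost, yet $R(r)=1$. Indeed, $\Pr\{d_g(\mathbf X,\hat{\mathbf X})\ge r\}=\Pr\{\mathbf X\neq\hat{\mathbf X}\}$, and vanishing failure probability forces $\hat U$ to asymptotically reveal $\mathbf X$. Meanwhile $R(r'')=0$ for $r''>r$, so right-continuity of $R(\cdot)$ also fails there. No argument that ignores absolute continuity can succeed.
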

\begin{proof}
The proof is detailed in Appendix \ref{app:thm_general_alpha_achievability}. We outline its two main steps below. First, we show that
\begin{align}
R(r)\le \inf_{\alpha\in\operatorname{VP}_r}\mathbb E_{\mathbf X}\left[-\log_2\alpha(\mathbf X)\right].
\end{align}
Starting from a finite coverage function $\alpha$ induced by sets of geodesic diameter at most $r$, we explicitly construct the auxiliary variable $\hat U$ by selecting a covering set containing the source realization, with probability proportional to its covering weight. The reconstruction is then generated from the corresponding posterior distribution. This construction ensures conditional-distribution-preserving constraint and satisfies the distortion constraint, since, conditional on $\hat U$, both $\mathbf X$ and $\hat{\mathbf X}$ lie in the same admissible region. A direct calculation of mutual information shows that
the achieved rate equals the logarithmic coverage cost
minus a nonnegative relative-entropy term, and is
therefore bounded above by
$\mathbb E_{\mathbf X}[-\log_2\alpha(\mathbf X)]$. Taking the infimum over all admissible coverage functions gives the first inequality.

Second, we show that
\begin{align}
\inf_{\alpha\in\operatorname{VP}_r}\mathbb E_{\mathbf X}\left[-\log_2\alpha(\mathbf X)\right]
=
\min_{\alpha\in\overline{\operatorname{VP}}_r}\mathbb E_{\mathbf X}\left[-\log_2\alpha(\mathbf X)\right].
\end{align}

\begin{figure}
    \centering
    \includegraphics[width=0.5\textwidth]{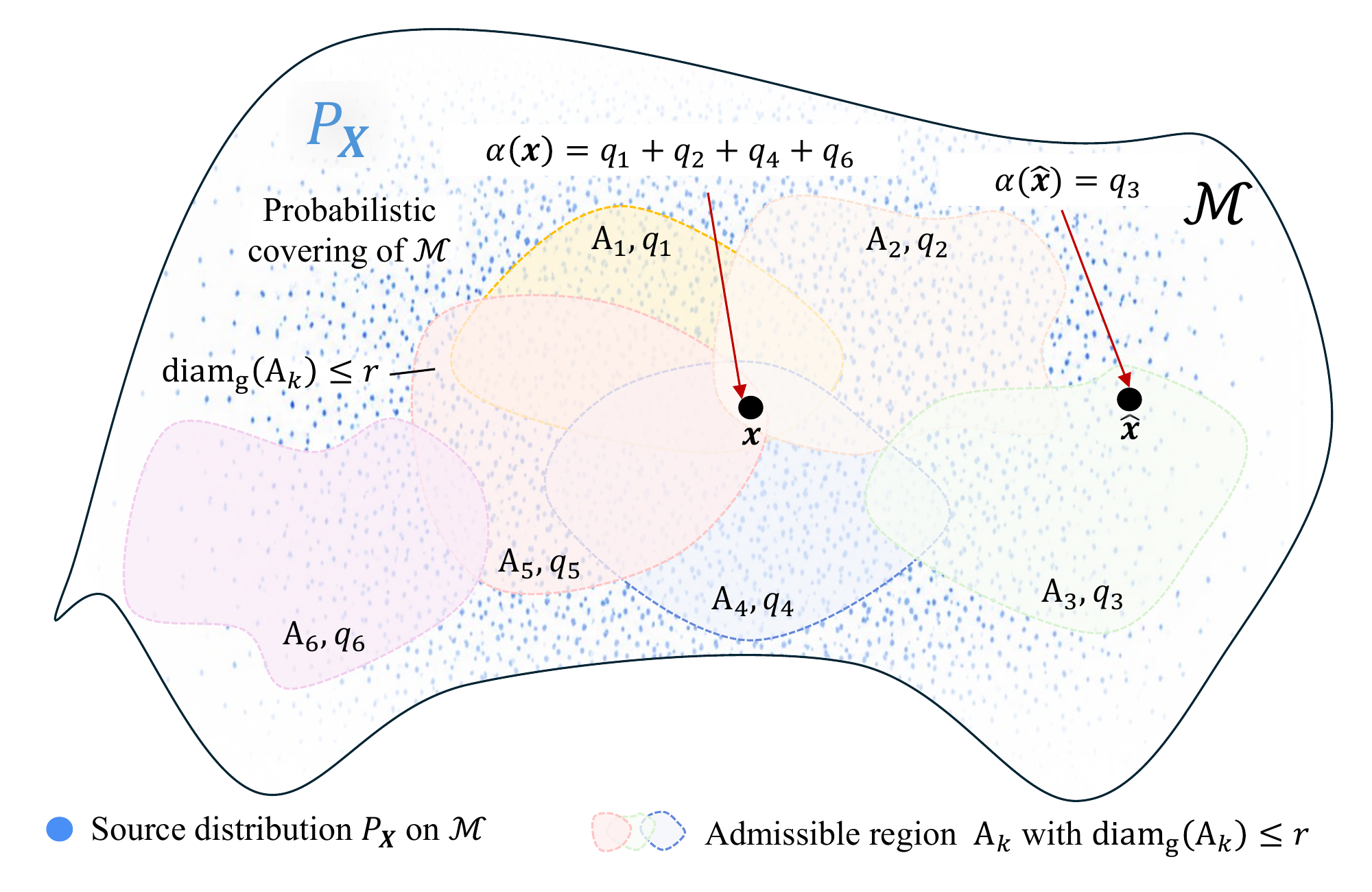}
    \caption{Geometric interpretation of Theorem \ref{thm_general_alpha_achievability}}
    \label{fig:upper_bound}
\end{figure}

We first show that taking the $L^1(P_{\mathbf X})$-closure
does not change the infimum. For a fixed positive
regularization parameter, we construct an approximating
sequence that is uniformly bounded away from zero,
allowing the logarithmic costs to converge. We then
let the regularization parameter tend to zero. We then use weak compactness of $\overline{\operatorname{VP}}_r$ in $L^1(P_{\mathbf X})$ together with the weak lower semicontinuity of the logarithmic cost functional to show that the infimum over the closure is attained.
\end{proof}

As illustrated in Fig.~\ref{fig:upper_bound},
Theorem~\ref{thm_general_alpha_achievability} admits
a probabilistic covering interpretation. Since admissible
regions may overlap, a point $\mathbf x\in\mathcal M$
may belong to several such regions. Its coverage value
is the sum of the design weights of all regions
containing it.

The theorem provides an achievable upper bound by
minimizing the expected negative logarithm of the
coverage value. The objective penalizes low coverage
more heavily in regions carrying greater source
probability mass, subject to the geometric constraints.

\begin{corollary}
\label{Cor_general_rate_distortion_bounds}
Combining the lower bound with the coverage-based upper
bound yields
\begin{align}
-\log_2 \beta_r \le R(r)\le\min_{\alpha\in\overline{\operatorname{VP}}_r}\mathbb E_{{\mathbf{X}}}\left[-\log_2\alpha({\mathbf{X}})\right].
\end{align}
\end{corollary}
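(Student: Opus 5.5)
This corollary is the conjunction of two results already established, so the plan is to invoke each and chain them for the fixed radius $r>0$. First, Theorem~\ref{thm_generative_rate_distortion_converse} gives $R(r)\ge-\log_2\beta_r$ directly. Recall that it rests on Proposition~\ref{pro:constraint_covering}: for any feasible pair $(P_{\hat U|\mathbf X},P_{\hat{\mathbf X}|\hat U})$ with failure probability $\epsilon$, the conditional-distribution constraint makes $\mathbf X$ and $\hat{\mathbf X}$ conditionally i.i.d.\ given $\hat U$. Hence $\tau(\hat u):=\Pr\{d_g(\mathbf X,\hat{\mathbf X})\ge r\mid \hat U=\hat u\}$ places $P_{\mathbf X|\hat U=\hat u}$ in $Q_r(\tau(\hat u))$. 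Writing $I(\mathbf X;\hat U)=\mathbb E[D_{KL}(P_{\mathbf X|\hat U}\|P_{\mathbf X})]\ge\mathbb E[\psi_r(\tau(\hat U))]$ with $\mathbb E[\tau(\hat U)]\le\epsilon$, and letting $\epsilon\downarrow0$ via Markov's inequality on $\tau$ and the liminf statement of the proposition, yields the lower bound. I take this as given.

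Second, Theorem~\ref{thm_general_alpha_achievability} gives $R(r)\le\min_{\alpha\in\overline{\operatorname{VP}}_r}\mathbb E_{\mathbf X}[-\log_2\alpha(\mathbf X)]$, with the minimum attained. That theorem is stated under the standing compactness and connectedness hypotheses on $(\mathcal M,g)$, which are exactly the assumptions of the present section, so it applies verbatim.

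Concatenating the two inequalities gives the displayed sandwich. The only point needing a check is well-definedness: the right-hand side may be $+\infty$ (the inequality is then trivial), and $\beta_r>0$ whenever some admissible region has positive mass. Positivity holds because $\mathcal M$ is compact: finitely many balls of radius $r/2$ cover it, so some ball has positive $P_{\mathbf X}$-mass and diameter at most $r$. Thus the left side is finite. There is no real obstacle here. The substantive work, the closure and minimum-attainment argument, lives inside Theorem~\ref{thm_general_alpha_achievability}, and the corollary simply records the combination.
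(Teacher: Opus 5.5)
Your proof is correct, and your check that $\beta_r>0$ via a finite cover by balls $B_{r/2}(\mathbf x_j)$ is sound. The same cover also gives $\alpha_0=\frac1N\sum_j\mathbf 1_{B_{r/2}(\mathbf x_j)}\ge\frac1N$, so the right-hand side is at most $\log_2 N$ and you need not allow for $+\infty$.

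The paper's written proof takes a different route. Instead of chaining Theorem~\ref{thm_generative_rate_distortion_converse} and Theorem~\ref{thm_general_alpha_achievability} through $R(r)$, it checks directly that the two bounds are compatible. For any $\alpha=\sum_k q_k\mathbf 1_{A_k}\in\operatorname{VP}_r$ it notes $\mathbb E_{\mathbf X}[\alpha(\mathbf X)]=\sum_k q_kP_{\mathbf X}(A_k)\le\beta_r$. Jensen's inequality then gives $\mathbb E_{\mathbf X}[-\log_2\alpha(\mathbf X)]\ge-\log_2\mathbb E_{\mathbf X}[\alpha(\mathbf X)]\ge-\log_2\beta_r$.

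This compatibility already follows from your chain (if $L\le R(r)\le U$ then $L\le U$). The Jensen argument alone does not place $R(r)$ between the bounds, so your route is the one that actually yields the displayed sandwich.

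What the paper's route adds is independence from the coding problem and a diagnosis of the gap. Closing it requires $\alpha$ to be $P_{\mathbf X}$-a.s.\ constant, so there is no Jensen slack. It also requires every region with positive weight to carry mass close to $\beta_r$. This is exactly what the Haar-averaged coverage $\alpha_\eta^*\equiv p_\eta$ achieves in Theorem~\ref{thm:homogeneous_upper_bound}.
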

\begin{proof}
By definition,
 \begin{align}
    \mathbb E_{\mathbf X}\left[\alpha(\mathbf X)\right]=\sum_{k=1}^{K}q_k P_\mathbf{X}(A_k), \quad \beta_r = \sup_{A\in\mathcal{A}_r}P_\mathbf{X}(A).
 \end{align}
Therefore, each $A_k$ satisfies $P_{\mathbf{X}}(A_k)\le \beta_r$. It follows that
\begin{align}
    \mathbb E_{\mathbf X}\left[\alpha(\mathbf X)\right]\le\sum_{k=1}^{K}q_k \beta_r=\beta_r
\end{align}
Combining the convexity of the negative logarithm with Jensen's inequality, for any $\alpha \in \operatorname{VP}_r$, we can derive
\begin{align}
    \mathbb E_{\mathbf X}\left[-\log_2\alpha(\mathbf X)\right]\ge-\log_2 \mathbb E_{\mathbf X}\left[\alpha(\mathbf X)\right]\ge -\log_2 \beta_r.
\end{align}
Taking the infimum over
$\alpha\in\operatorname{VP}_r$ yields the desired result.
\end{proof}
Together, these results provide lower and upper bounds
on $R(r)$. To investigate when these bounds coincide,
we next specialize to homogeneous manifolds.

%---------------%

\section{Special Case}
In this section, we consider a special class of manifolds
to establish the tightness of the upper and lower bounds
and illustrate how they can be evaluated in closed form.
\subsection{Compact Homogeneous Riemannian Manifold}
We first establish a proposition showing that
$\overline{\operatorname{VP}}_r$ contains integral mixtures
of indicators of admissible regions, in addition to
finite convex combinations.
\begin{proposition}
\label{prop_integral_coverage_closure}
Let $\mathcal Z$ be a measurable space equipped with
a probability measure $\nu$. For each $z\in\mathcal Z$,
it marks an element $A_z\in\mathcal A_r$, with the mapping $(z,\mathbf x)\mapsto\mathbf1_{A_z}(\mathbf x)$ being $\mathcal F\otimes\mathcal B(\mathcal M)$-measurable. Define
\begin{align}
\varphi(\mathbf x):=\int_{\mathcal Z}\mathbf 1_{A_z}(\mathbf x)\nu(dz).
\end{align}
Then
\begin{align}
    \varphi \in\overline{\operatorname{VP}}_r.
\end{align}
\end{proposition}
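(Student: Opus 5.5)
Our plan is to approximate $\varphi$ in $L^1(P_{\mathbf X})$ by empirical averages of indicators, using a law-of-large-numbers argument on the product space $\mathcal Z\times\mathcal M$. Let $Z_1,Z_2,\ldots$ be i.i.d.\ with law $\nu$, independent of everything else. For each $N$, define the random coverage function
\begin{align}
\alpha_N(\mathbf x):=\frac1N\sum_{i=1}^N\mathbf 1_{A_{Z_i}}(\mathbf x).
\end{align}
Every realization of $\alpha_N$ is a finite mixture of admissible regions with weights $q_i=1/N$, so it lies in $\operatorname{VP}_r$. Repeated regions are harmless, since they can be merged by adding their weights. The task is therefore to show that some realization is close to $\varphi$ in $L^1(P_{\mathbf X})$.

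The key step is a second-moment computation. Fix $\mathbf x$. By the joint measurability assumption, $z\mapsto\mathbf 1_{A_z}(\mathbf x)$ is measurable, and the random variables $\mathbf 1_{A_{Z_i}}(\mathbf x)$ are i.i.d.\ Bernoulli with mean $\varphi(\mathbf x)$. Hence
\begin{align}
\mathbb E_{Z}\left[(\alpha_N(\mathbf x)-\varphi(\mathbf x))^2\right]=\frac{\varphi(\mathbf x)(1-\varphi(\mathbf x))}{N}\le\frac1{4N}.
\end{align}
Joint measurability in $(z,\mathbf x)$ makes $(z_1,\ldots,z_N,\mathbf x)\mapsto|\alpha_N(\mathbf x)-\varphi(\mathbf x)|$ measurable on the product space. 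It also makes $\varphi$ Borel measurable, by Fubini--Tonelli applied to the nonnegative integrand. We may then interchange expectations by Tonelli's theorem and apply Cauchy--Schwarz to obtain
\begin{align}
\mathbb E_{Z}\,\mathbb E_{\mathbf X}\left[|\alpha_N(\mathbf X)-\varphi(\mathbf X)|\right]\le\mathbb E_{\mathbf X}\left[\sqrt{\mathbb E_Z(\alpha_N(\mathbf X)-\varphi(\mathbf X))^2}\right]\le\frac{1}{2\sqrt N}.
\end{align}

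Since the average over $Z$ of the $L^1(P_{\mathbf X})$ error is at most $1/(2\sqrt N)$, there is a deterministic choice $z_1^{(N)},\ldots,z_N^{(N)}$ for which
\begin{align}
\mathbb E_{\mathbf X}\left|\alpha_N(\mathbf X)-\varphi(\mathbf X)\right|\le\frac1{2\sqrt N}.
\end{align}
The resulting sequence $\{\alpha_N\}\subseteq\operatorname{VP}_r$ converges to $\varphi$ in $L^1(P_{\mathbf X})$. Moreover, $\varphi$ takes values in $[0,1]$ because $\nu$ is a probability measure, so by Definition~\ref{def_coverage_function_closure} we conclude $\varphi\in\overline{\operatorname{VP}}_r$.

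The main obstacle is the measurability bookkeeping. We must confirm that $\varphi$ is Borel, that the composite error map is jointly measurable so that Tonelli applies, and that the selection of a good realization is legitimate. All three follow directly from the assumed $\mathcal F\otimes\mathcal B(\mathcal M)$-measurability of $(z,\mathbf x)\mapsto\mathbf 1_{A_z}(\mathbf x)$. No topological structure on $\mathcal Z$ or on the family $\{A_z\}$ is needed.
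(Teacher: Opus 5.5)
Your proof is correct and follows essentially the same route as the paper. Both arguments draw i.i.d.\ samples $Z_i\sim\nu$, use the pointwise Bernoulli variance bound $\varphi(1-\varphi)/N\le 1/(4N)$, select a good deterministic realization, and apply Cauchy--Schwarz to get an $L^1(P_{\mathbf X})$ error of at most $1/(2\sqrt N)$. The differences are cosmetic: you apply Cauchy--Schwarz in $Z$ before selecting the realization, whereas the paper selects one with small $L^2$ error first, and you state the Tonelli and measurability steps explicitly where the paper leaves them implicit.
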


\begin{proof}
To verify the defining approximation property of
$\overline{\operatorname{VP}}_r$, define the sequence
of random functions $\Phi_n$ by
\begin{align}
    \varPhi _n(\mathbf{x}):=\frac{1}{n}\sum_{i=1}^{n}\mathbf 1_{A_{Z_i}}(\mathbf{x}),
\end{align}
where $\mathbf{Z}_1,...,\mathbf{Z}_n$ are independent and identically distributed (i.i.d) random variables with distribution $\nu$.
Note that %Finally, to assess the tightness of the upper bound derived for the general case in Theorem \ref{thm_general_alpha_achievability}, we introduce the results in \cite{Harangi2023GraphonEntropy} that 
$\mathbf 1_{A_{z_i}}(\mathbf{x})$ has a Bernoulli distribution and $\mathbb{E}_{\mathbf{Z}_i}(\mathbf 1_{A_{z_i}}(\mathbf{x}))=\varphi(\mathbf{x})$, for all $i=1,2,...,n$. 
Therefore, 
\begin{align}
\mathbb{E}_{\mathbf{Z}_1,...,\mathbf{Z}_n} \left[\mathbb{E}_\mathbf{X}\left(|| \varPhi_n(\mathbf{X})- \varphi(\mathbf{X})||^2\right)\right]
=\mathbb{E}_\mathbf{X}\left\{\mathbb{E}_{\mathbf{Z}_1,...,\mathbf{Z}_n}\left[\frac{1}{n}\sum_{i=1}^{n}\left(\mathbf 1_{A_{Z_i}}(\mathbf{X})- \varphi(\mathbf{X})\right)\right]^2\right\}
=\mathbb{E}_\mathbf{X}\left(\frac{ \varphi(\mathbf{X})(1- \varphi(\mathbf{X}))}{n}\right)
\le \frac{1}{4n}.
\end{align}
Thus,for each $n$, there exist realizations $\{\mathbf{z}_i\}_{i=1}^{n}$ such that the function $\varphi_n(\mathbf{x})$ satisfies 
\begin{align}
    \varphi_n(\mathbf{x})=\frac{1}{n}\sum_{i=1}^{n}\mathbf 1_{A_{z_i}}(\mathbf{x})
\end{align}
satisfying 
\begin{align}
    \mathbb{E}_\mathbf{X}\left(||\varphi_n(\mathbf{X})- \varphi(\mathbf{X})||^2\right)
    \le\mathbb{E}_{\mathbf{Z}_1,...,\mathbf{Z}_n} \left[\mathbb{E}_\mathbf{X}\left(||\varPhi_n(\mathbf{X})- \varphi(\mathbf{X})||^2\right)\right]
    \le\frac{1}{4n}
\end{align}
Therefore, by the Cauchy-Schwarz inequality, we have 
\begin{align}
    \lim_{n\rightarrow\infty}\mathbb{E}_\mathbf{X}\left(|\varphi_n(\mathbf{X})- \varphi(\mathbf{X})|\right)\le \lim_{n\rightarrow\infty}\frac{1}{2\sqrt{n}}=0,
\end{align}
By construction, $\varphi_n\in\operatorname{VP}_r$.
Since $\varphi_n$ converges to $\varphi$ in
$L^1(P_{\mathbf X})$, the definition of the closure gives
$\varphi\in\overline{\operatorname{VP}}_r$.
% Then we can calculate it value by 
% \begin{align}
%          \varphi(\mathbf{x})&=\int_{\mathbb{S}^d} \mathbf{1}_{B_{r/2}(\mathbf{z})}(\mathbf{x})P_\mathbf{X}(d\mathbf{z})=P_\mathbf{X}(B_{r/2}(\mathbf{x}))\\
%         &=\frac{V_g(B_{r/2}(\mathbf{x}))}{V_g(\mathcal{M})}=\frac{\displaystyle\int_0^{\frac{r}{2}}\sin^{d-1}tdt}{\displaystyle\int_0^\pi\sin^{d-1}tdt}=\beta_r.
% \end{align}
\end{proof}
The proposition shows that $\overline{\operatorname{VP}}_r$ contains integral mixtures
of indicators of admissible regions as well as finite
convex combinations. This result allows us to derive
an explicit upper bound for homogeneous manifolds.

\begin{theorem}\label{thm:homogeneous_upper_bound}
Let $(\mathcal M,g)$ be a compact homogeneous Riemannian manifold, and let $P_{\mathbf X}$ be the uniform distribution on $\mathcal M$. For any $r>0$, 
\begin{align}
R(r)=-\log_2\beta_r.
\end{align}
\end{theorem}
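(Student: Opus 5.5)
The lower bound $R(r)\ge-\log_2\beta_r$ is already given by Theorem~\ref{thm_generative_rate_distortion_converse}. It therefore suffices to show the matching upper bound. By Theorem~\ref{thm_general_alpha_achievability}, it is enough to exhibit a single $\alpha\in\overline{\operatorname{VP}}_r$ with $\mathbb E_{\mathbf X}[-\log_2\alpha(\mathbf X)]\le-\log_2\beta_r+\delta$ for arbitrary $\delta>0$. The plan is to make $\alpha$ essentially constant and equal to (nearly) $\beta_r$ by averaging an almost-optimal admissible region over the isometry group.

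\emph{Step 1: choose a near-optimal region.} Fix $\delta'>0$ and pick $A\in\mathcal A_r$ with $P_{\mathbf X}(A)\ge\beta_r-\delta'$. We may assume $A$ is closed, since the closure of a set has the same diameter and at least the same mass. Let $G=\mathrm{Isom}(\mathcal M,g)$. By the Myers--Steenrod theorem, $G$ is a compact Lie group for compact $\mathcal M$, and homogeneity means $G$ acts transitively. Let $\nu$ be the normalized Haar measure on $G$. For each $h\in G$, set $A_h:=hA$. Since $h$ is an isometry, $\operatorname{diam}_g(hA)=\operatorname{diam}_g(A)\le r$, so $A_h\in\mathcal A_r$. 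The map $(h,\mathbf x)\mapsto\mathbf 1_{hA}(\mathbf x)=\mathbf 1_A(h^{-1}\mathbf x)$ is jointly Borel measurable because the action $G\times\mathcal M\to\mathcal M$ is continuous and $A$ is Borel. Proposition~\ref{prop_integral_coverage_closure} with $\mathcal Z=G$ then gives
\begin{align}
\varphi(\mathbf x):=\int_G\mathbf 1_{A}(h^{-1}\mathbf x)\,\nu(dh)\in\overline{\operatorname{VP}}_r .
\end{align}

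\emph{Step 2: show $\varphi$ is constant and equals $P_{\mathbf X}(A)$.} By left invariance of $\nu$, $\varphi(k\mathbf x)=\varphi(\mathbf x)$ for all $k\in G$. Transitivity then forces $\varphi\equiv c$. To identify $c$, integrate against the uniform measure $P_{\mathbf X}$ and use Fubini:
\begin{align}
c=\int_{\mathcal M}\varphi\,dP_{\mathbf X}=\int_G P_{\mathbf X}(hA)\,\nu(dh)=P_{\mathbf X}(A),
\end{align}
since isometries preserve $V_g$ and hence preserve the uniform distribution. Therefore
\begin{align}
\mathbb E_{\mathbf X}[-\log_2\varphi(\mathbf X)]=-\log_2 P_{\mathbf X}(A)\le-\log_2(\beta_r-\delta').
\end{align}
Combining this with Theorem~\ref{thm_general_alpha_achievability} and letting $\delta'\downarrow0$ gives $R(r)\le-\log_2\beta_r$. (If $\beta_r$ is attained, $\delta'=0$ works directly.) Together with the lower bound, this yields equality.

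\emph{Main obstacle.} The delicate point is the measure-theoretic setup in Step~1. We must justify that the isometry group is a compact group acting continuously and transitively, so that a bi-invariant probability measure exists and joint measurability holds. This is what allows Proposition~\ref{prop_integral_coverage_closure} to be applied. Replacing $A$ by its closure keeps everything Borel and gives a closed set, which simplifies measurability further. A minor additional check is that $\beta_r>0$, so the logarithm is finite: any small geodesic ball of diameter at most $r$ has positive volume.
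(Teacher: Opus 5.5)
Your proposal is correct and takes essentially the same route as the paper. You Haar-average a near-optimal admissible region over a compact transitive isometry group, use Proposition~\ref{prop_integral_coverage_closure} to place the averaged coverage function in $\overline{\operatorname{VP}}_r$, show by invariance and Fubini that it equals $P_{\mathbf X}(A)$ everywhere, and then combine Theorem~\ref{thm_general_alpha_achievability} with Theorem~\ref{thm_generative_rate_distortion_converse}; your extra remarks (Myers--Steenrod, joint measurability of the action, passing to $\overline{A}$, and $\beta_r>0$) only make explicit points the paper takes for granted.
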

\begin{proof}
Since $(\mathcal M,g)$ is compact and homogeneous,
there exists a compact group of isometries $G$ acting
transitively on $\mathcal M$. Let $\mu_G$ denote the
normalized Haar probability measure on $G$, so that
$\mu_G(G)=1$.
Then, fix $\eta\in(0,\beta_r)$. By the definition of $\beta_r$, there exists a nearly optimal Borel set $A_\eta\subseteq\mathcal M$ such that
\begin{align}
\operatorname{diam}_g(A_\eta)\le r,\qquad p_\eta:= P_{\mathbf X}(A_\eta)>\beta_r-\eta.
\end{align}

To apply Proposition~\ref{prop_integral_coverage_closure},
take $\mathcal Z=G$, $\nu=\mu_G$, and
$A_u=uA_\eta$ for $u\in G$, where
$uA_\eta:=\{u\mathbf x:\mathbf x\in A_\eta\}$.
Since each $u\in G$ is an isometry,
\begin{align}
    \operatorname{diam}_g(\mu A_\eta)=\operatorname{diam}_g(A_\eta)\le r
\end{align}
Therefore, $\mu A_\eta \in \mathcal{A}_r$.
Hence, we can define
\begin{align}
\alpha_\eta^*(\mathbf{x}):=\int_{G} \mathbf{1}_{\mu A_\eta}(\mathbf{x})\mu_G(d\mu).
\end{align}
Then, for any $\eta>0$, we have
\begin{align}
    \alpha_\eta^* \in\overline{\operatorname{VP}}_r
\end{align}
As $\mathcal M$ is homogeneous, for arbitrary $\mathbf x,\mathbf \mathbf{x}'\in\mathcal M$, there exists $\sigma \in G$ such that $\mathbf{x}'=\sigma \mathbf x$. Hence,
\begin{align}
\alpha_\eta^*(\mathbf{x}')&=\int_G\mathbf 1\left\{\sigma\mathbf x\in uA_\eta\right\}\mu_G(du)
=\int_G\mathbf 1\left\{\mathbf x\in \sigma^{-1}uA_\eta\right\}\mu_G(du)
=\int_G\mathbf 1\left\{\mathbf x\in wA_\eta\right\}\mu_G(dw)
=\alpha_\eta^*(\mathbf{x}).
\end{align}
The change of variables $w=\sigma^{-1}u$ and the left
invariance of Haar measure give the third equality. Thus, $\alpha_\eta^*(\mathbf{x})$ is constant on $\mathcal M$. Since $P_{\mathbf X}$ is a probability measure,
\begin{align}
\alpha_\eta^*(\mathbf{x})&=\mathbb{E}_\mathbf{X}\left[\alpha_\eta^*(\mathbf{X})\right]
=\int_{\mathcal M}\alpha_\eta^*(\mathbf{x})P_{\mathbf X}(d\mathbf{x})
=\int_G\int_{\mathcal M}\mathbf 1\left\{\mathbf{x}\in uA_\eta\right\}P_{\mathbf X}(d\mathbf{x})\mu_G(du)
=\int_G P_{\mathbf X}(uA_\eta)\mu_G(du)
\end{align}

Since $P_{\mathbf X}$ is uniform and isometries preserve
Riemannian volume, for every $u\in G$,
\begin{align}
    P_\mathbf{X}(\mu A_\eta)=P_\mathbf{X}(A_\eta)=p_\eta.
\end{align}
Thus, we have
\begin{align}
    \alpha_\eta^*(\mathbf{x})=p_\eta > \beta_r-\eta.
\end{align}
Substituting this coverage function into the upper bound
in Theorem~\ref{thm_general_alpha_achievability} gives
\begin{align}
    R(r)\le \mathbb E_{{\mathbf{X}}}\left[-\log_2\alpha_\eta^*({\mathbf{X}})\right]<-\log_2(\beta_r -\eta).
\end{align}

Letting $\eta\downarrow0$ yields
\begin{align}
    R(r)\le -\log_2 \beta_r.
\end{align}
Combining this inequality with
Theorem~\ref{thm_generative_rate_distortion_converse}
completes the proof.
\end{proof}
The proof establishes that the lower and upper bounds
coincide and provides a construction whose mutual
information approaches the optimal rate.

\begin{corollary}
For a uniformly distributed source on a compact
homogeneous Riemannian manifold, the optimal rate
$R(r)=-\log_2\beta_r$ can be approached arbitrarily closely
using the encoder and generator kernels
    \begin{align}
        &P_{\hat U\mid\mathbf X}(du\mid\mathbf x)=\frac{\mathbf 1\!\left\{\mathbf x\in uA_\eta\right\}}{p_\eta}\mu_G(du),\\
        &P_{\hat{\mathbf X}\mid\hat U}(d\hat{\mathbf x}\mid u):=\frac{\mathbf 1\!\left\{\hat{\mathbf x}\in uA_\eta\right\}}{p_\eta}P_{\mathbf X}(d\hat{\mathbf x}).
    \end{align}
\end{corollary}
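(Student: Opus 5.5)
The plan is to verify directly that the stated pair of kernels defines a feasible point of the optimization in Definition~\ref{def_generative_rate_distortion} for every $\epsilon\ge0$, and that its mutual information equals $-\log_2 p_\eta$. Letting $\eta\downarrow0$ and using $p_\eta>\beta_r-\eta$ together with Theorem~\ref{thm:homogeneous_upper_bound} then gives the claim. Throughout, $A_\eta$ and $p_\eta$ are as in the proof of Theorem~\ref{thm:homogeneous_upper_bound}, and $G$ and $\mu_G$ are the transitive compact isometry group and its normalized Haar measure.

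\textbf{Step 1: the kernels are well defined, and $\hat U$ is Haar distributed.} The encoder kernel is a probability kernel because $\int_G \mathbf 1\{\mathbf x\in uA_\eta\}\mu_G(du)=\alpha_\eta^*(\mathbf x)=p_\eta$, which is exactly the constancy computation in the proof of Theorem~\ref{thm:homogeneous_upper_bound}. The generator kernel is a probability kernel because $P_{\mathbf X}(uA_\eta)=p_\eta$ for every $u\in G$. By Fubini, the marginal of $\hat U$ has density
\begin{align}
\int_{\mathcal M}\frac{\mathbf 1\{\mathbf x\in uA_\eta\}}{p_\eta}P_{\mathbf X}(d\mathbf x)=1
\end{align}
with respect to $\mu_G$, so $\hat U\sim\mu_G$. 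Here I need the joint measurability of $(u,\mathbf x)\mapsto\mathbf 1\{\mathbf x\in uA_\eta\}$. I expect this to be the main technical obstacle. It follows because $\{(u,\mathbf x):u^{-1}\mathbf x\in A_\eta\}$ is the preimage of a Borel set under the continuous action map, and it is needed for all the Fubini steps.

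\textbf{Step 2: the constraints hold.} Bayes' rule, using the joint density $\mathbf 1\{\mathbf x\in uA_\eta\}/p_\eta$ with respect to $P_{\mathbf X}\otimes\mu_G$, gives
\begin{align}
P_{\mathbf X\mid\hat U}(d\mathbf x\mid u)=\frac{\mathbf 1\{\mathbf x\in uA_\eta\}}{p_\eta}P_{\mathbf X}(d\mathbf x)=P_{\hat{\mathbf X}\mid\hat U}(d\mathbf x\mid u),
\end{align}
so the conditional-distribution constraint holds $\mu_G$-a.s. Sampling $\hat{\mathbf X}$ from $\hat U$ alone gives the Markov chain $\mathbf X-\hat U-\hat{\mathbf X}$. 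Conditionally on $\hat U=u$, both $\mathbf X$ and $\hat{\mathbf X}$ lie in $uA_\eta$, whose diameter is at most $r$. Hence $d_g(\mathbf X,\hat{\mathbf X})\le r$ almost surely.

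There is a subtlety here: the constraint is stated as $\Pr\{d_g\ge r\}\le\epsilon$, so equality $d_g=r$ must be excluded. To handle this, I will choose $A_\eta$ among sets of diameter at most $r'<r$, with $r'$ close to $r$. I then need $\beta_{r'}\to\beta_r$ as $r'\uparrow r$. I expect this to follow by shrinking a near-optimal set slightly, or simply by inheriting the same convention already used in Theorem~\ref{thm:homogeneous_upper_bound} and the upper-bound construction of Theorem~\ref{thm_general_alpha_achievability}. I will follow that convention.

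\textbf{Step 3: compute the rate.} The information density is
\begin{align}
\log_2\frac{dP_{\hat U\mid\mathbf X}}{d\mu_G}=\log_2\frac{1}{p_\eta}
\end{align}
on the support of the joint law. Therefore $I(\mathbf X;\hat U)=-\log_2 p_\eta<-\log_2(\beta_r-\eta)$. Since the construction is feasible for every $\epsilon$, we get $R(r,\epsilon)\le-\log_2 p_\eta$, and letting $\eta\downarrow0$ shows that the rate approaches $-\log_2\beta_r=R(r)$.
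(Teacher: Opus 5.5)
Your Steps 1 and 3, and the first half of Step 2, match the paper's proof. These cover the joint density $\mathbf 1\{\mathbf x\in uA_\eta\}/p_\eta$ with respect to $P_{\mathbf X}\otimes\mu_G$, Bayes' rule for the posterior, the Markov chain, the containment of both $\mathbf X$ and $\hat{\mathbf X}$ in $uA_\eta$, and the information-density computation giving $I(\mathbf X;\hat U)=-\log_2 p_\eta$. Your joint-measurability remark via the continuous action map is a welcome detail that the paper leaves implicit.

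The gap is in how you handle the event $\{d_g(\mathbf X,\hat{\mathbf X})=r\}$. You propose taking $A_\eta$ of diameter at most $r'<r$ and invoking $\beta_{r'}\to\beta_r$ as $r'\uparrow r$. This is not only unproven but false in general. On $\mathbb S^d$ with $r=\pi$, any $A$ with $\operatorname{diam}_g(A)<\pi$ is disjoint from its antipodal image $-A$, and $-A$ has the same uniform measure. Hence $\beta_{r'}\le 1/2$ for every $r'<\pi$, whereas $\beta_\pi=1$ and $R(\pi)=0$. Your shrunken construction therefore costs $-\log_2 p_\eta\ge 1$ bit and does not approach the optimum. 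Even where $\beta$ happens to be left-continuous, ``shrinking a near-optimal set slightly'' would need an argument: a manifold has no canonical dilation, and a near-optimal Borel set may have empty interior. Your fallback of ``inheriting the same convention'' does not close the gap either. The paper adopts no convention at this point; it proves that the boundary event has probability zero.

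The missing idea is the null level-set argument, which the paper uses both here and in Appendix~D. Keep the original $A_\eta$ with $\operatorname{diam}_g(A_\eta)\le r$, so that $\Pr\{d_g(\mathbf X,\hat{\mathbf X})>r\}=0$. Conditionally on $\hat U=u$, the variables $\mathbf X$ and $\hat{\mathbf X}$ are independent with laws absolutely continuous with respect to $V_g$, because $P_{\mathbf X}$ is uniform. Their conditional joint law is therefore absolutely continuous with respect to $V_g\otimes V_g$. The set $\{(\mathbf x,\hat{\mathbf x}):d_g(\mathbf x,\hat{\mathbf x})=r\}$ is $(V_g\otimes V_g)$-null: each distance sphere $\{\hat{\mathbf x}:d_g(\mathbf x,\hat{\mathbf x})=r\}$ has zero volume, for instance by the coarea formula since $d_g(\mathbf x,\cdot)$ has unit gradient almost everywhere, and Fubini then gives the product statement. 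Hence $\Pr\{d_g(\mathbf X,\hat{\mathbf X})\ge r\}=0$ for every $r>0$, including the case $r=\pi$ on the sphere where your approach breaks down. With this step in place, the rest of your argument goes through unchanged.
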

\begin{proof}
By construction, the joint distribution of $(\mathbf X,\hat U)$ is
\begin{align}
P_{\mathbf X,\hat U}(d\mathbf x,du)=\frac{\mathbf 1\!\left\{\mathbf x\in uA_\eta\right\}}{p_\eta}P_{\mathbf X}(d\mathbf x)\mu_G(du).
\end{align}
Consequently,
\begin{align}
P_{\mathbf X\mid\hat U}(d\mathbf x\mid u)=\frac{\mathbf 1\!\left\{\mathbf x\in uA_\eta\right\}}{p_\eta}P_{\mathbf X}(d\mathbf x).
\end{align}
Therefore, we have
\begin{align}
P_{\hat{\mathbf X}\mid\hat U=u}=P_{\mathbf X\mid\hat U=u},\qquad P_{\hat U}\text{-a.e. }u,
\end{align}
and hence the conditional-distribution constraint is satisfied. By construction, $\hat{\mathbf X}$ is generated solely according to $P_{\hat{\mathbf X}\mid\hat U}$, and therefore $\mathbf X-\hat U-\hat{\mathbf X}$ holds.

For $P_{\hat U}$-almost every $u$, we have
\begin{align}
P_{\mathbf X\mid\hat U=u}(uA_\eta)=P_{\hat{\mathbf X}\mid\hat U=u}(uA_\eta)=\int_{uA_\eta}P_{\mathbf X\mid\hat U=u}(dx)=1.
\end{align}
Since
\begin{align}
\operatorname{diam}_g(uA_\eta)\le r,
\end{align}
we have
\begin{align}
P\!\left(d_g(\mathbf X,\hat{\mathbf X})>r\mid\hat U=u\right)=0.
\end{align}
Since $P_{\mathbf X}$ is the uniform distribution with
respect to Riemannian volume, $P_{\mathbf X}\ll V_g$. Hence, both $P_{\mathbf X\mid\hat U=u}$ and $P_{\hat{\mathbf X}\mid\hat U=u}$ are absolutely continuous with respect to $V_g$. For $r>0$, the level set
\begin{align}
\left\{(\mathbf \mathbf{x},\hat{\mathbf x})\in\mathcal M\times\mathcal M:d_g(\mathbf \mathbf{x},\hat{\mathbf x})=r\right\}
\end{align}
has zero $(V_g\otimes V_g)$-measure. Hence,
\begin{align}
P\!\left(d_g(\mathbf X,\hat{\mathbf X})=r\mid\hat U=u\right)=0,
\end{align}
and consequently,
\begin{align}
P\!\left(d_g(\mathbf X,\hat{\mathbf X})\ge r\right)=0.
\end{align}
Finally, we evaluate the required rate. For $P_{\hat U}$-almost every $u$,
\begin{align}
\frac{dP_{\mathbf X\mid\hat U=u}}{dP_{\mathbf X}}(\mathbf x)=\frac{\mathbf 1\!\left\{\mathbf x\in uA_\eta\right\}}{p_\eta}.
\end{align}
Therefore,
\begin{align}
D_{\mathrm{KL}}\!\left(P_{\mathbf X\mid\hat U=u}\middle\|P_{\mathbf X}\right)=-\log_2 p_\eta.
\end{align}
Using the conditional-divergence representation of mutual information,
\begin{align}
I(\mathbf X;\hat U)=\int_G D_{\mathrm{KL}}\!\left(P_{\mathbf X\mid\hat U=u}\middle\|P_{\mathbf X}\right)P_{\hat U}(du)=\int_G\left(-\log_2 p_\eta\right)\mu_G(du)=-\log_2 p_\eta.
\end{align}
Since $\eta>0$ is arbitrary, letting $\eta\downarrow0$ yields
\begin{align}
I(\mathbf X;\hat U)=-\log_2\beta_r.
\end{align}
\end{proof}

\subsection{$d$-Dimensional Spherical Manifold}
We now specialize to a uniformly distributed source
on a spherical manifold. This example also illustrates
how a manifold with low intrinsic dimension can be
embedded in a high-dimensional ambient space.

First, we define the low-dimensional manifold $\mathcal{M}$. Let $\mathbb{S}^d$ denote the standard $d$-dimensional unit sphere, defined by
\begin{align}
    \mathbb S^d:=\left\{\mathbf z\in\mathbb R^{d+1}:||\mathbf z||_2=1\right\},
\end{align}
equipped with the standard round Riemannian metric $g^{\mathbb{S}}$, inducing the geodesic distance $d_g^{\mathbb{S}}$ and volume measure $V_{g}^{\mathbb{S}}$.
Further, let $\mathbf{L} \in \mathbb{R}^{m\times(d+1)}$ be a semi-orthogonal matrix satisfying
\begin{align}
\mathbf{L}^{\mathsf T}\mathbf{L}=\mathbf I_{d+1},
\end{align}
where $\mathbf I_{d+1}$ is the $(d+1)$-dimensional identity matrix.
Now, the low-dimensional manifold $\mathcal M$ embedded in the high-dimensional ambient space $\mathbb{R}^m$ can be defined by
\begin{align}
\mathcal M:=\left\{\mathbf x\in\mathbb{R}^m:\mathbf{x}=\mathbf{L}\mathbf{z},\mathbf{z}\in\mathbb{R}^{d+1}, ||\mathbf z||_2=1\right\},
\end{align}
equipped with the Riemannian metric $g$ induced by $g^{\mathbb{S}}$ and $\mathbf{L}$. Specifically, for any $\mathbf{x}=\mathbf{L}\mathbf{z}\in\mathcal M$ and any $\mathbf{v},\mathbf{w}\in T_\mathbf{x}\mathcal M$, the Riemannian metric $g$ is defined by
\begin{align}
    g_\mathbf{x}(\mathbf{v},\mathbf{w})=g^{\mathbb{S}}_\mathbf{z}(\mathbf{L}^{\mathsf T}\mathbf{v},\mathbf{L}^{\mathsf T}\mathbf{w})=\mathbf{v}^{\mathsf T}\mathbf{L}\mathbf{L}^{\mathsf T}\mathbf{w}=\mathbf{v}^{\mathsf T}\mathbf{w}.
\end{align}
Therefore, for $\mathbf x=\mathbf{L}\mathbf{z}, \hat{\mathbf{x}}=\mathbf{L}\hat{\mathbf{z}}\in\mathcal M$, the geodesic distance is preserved as
\begin{align}
    d_g(\mathbf x, \hat{\mathbf{x}})=d_g^{\mathbb{S}}(\mathbf z, \hat{\mathbf{z}})=\arccos(\mathbf z^{\mathsf T} \hat{\mathbf{z}}).
\end{align}
Similarly, the volume measure is also preserved.
Thus, although each point of $\mathcal M$ is represented
by $m$ ambient coordinates, the manifold has only $d$
intrinsic degrees of freedom.

We assume that $\mathbf X$ is uniformly distributed
on $\mathcal M$ with respect to $V_g$.
In this case, evaluating $\beta_r$ reduces to finding
the maximum volume of a set satisfying the diameter
constraint:
\begin{align}
    \beta_r=\sup_{{A\in \mathcal{B}(\mathcal{M}):\operatorname{diam}_g(A)\le r}}\frac{V_g(A)}{V_g(\mathcal{M})}.
\end{align}
As the manifold $(\mathcal{M},g)$ is induced by $\mathbb{S}$ and a semi-orthogonal matrix $\mathbf{L}$, the geodesic distance and volume measure are preserved. Therefore, we have
\begin{align}
    \beta_r=\sup_{{A\in \mathcal{B}(\mathbb{S}^d):\operatorname{diam}_{g^\mathbb{S}}(A)\le r}}\frac{V_g^{\mathbb{S}}(A)}{V_g^{\mathbb{S}}(\mathbb{S}^d)}.
\end{align}
By the spherical isodiametric inequality \cite{BoroczkySagmeister2020}, for $d\ge 2$ and $r\in(0,\pi)$, if a measurable set $A$ satisfies $\operatorname{diam}_{g^\mathbb{S}}(A)\le r$, then
\begin{align}
    V_g^{\mathbb{S}}(A)\le V_g^{\mathbb{S}}(B^\mathbb{S}_{r/2}(\cdot)).
\end{align}
Therefore, the maximum volume of feasible set is $V_g^{\mathbb{S}}(B^\mathbb{S}_{r/2}(\cdot))$.
On the manifold, the volume of a geodesic ball of radius $r$, satisfying $0<r<\pi$, is given by
\begin{align}\label{eq:spherical_volume}
V_g^{\mathbb{S}}(B^\mathbb{S}_r(\cdot))=|\mathbb{S}^{d-1}|\int_{0}^{r}sin^{d-1}\theta d\theta.
\end{align}
Thus, we have
\begin{align}
    \beta_r=\frac{V_g^{\mathbb{S}}(B^\mathbb{S}_{r/2}(\cdot))}{V_g^{\mathbb{S}}(\mathbb{S}^d)}=\frac{\displaystyle\int_0^{\frac{r}{2}}\sin^{d-1}tdt}{\displaystyle\int_0^\pi\sin^{d-1}tdt}.
\end{align}
Substituting this expression into
Theorem~\ref{thm:homogeneous_upper_bound} yields
\begin{align}\label{eq:eq:spherical_rate_lower_bound}
R(r)=-\log_2\left[\frac{\displaystyle\int_0^{\frac{r}{2}}\sin^{d-1}t dt}{\displaystyle\int_0^\pi\sin^{d-1}t dt}\right].
\end{align}
Fig.\ref{fig_small_radius_approximation} shows the rate $R(r)$ versus the radius $r$ for different dimensions $d$. It can be observed that the rate increases as the radius decreases, and the rate also increases with the dimension of the manifold.
%This result further corroborates Corollary \ref{Cor_general_rate_distortion_bounds} and, at the same time, confirms the consistency of the upper bound established in Theorem \ref{thm:homogeneous_upper_bound} and Theorem \ref{thm_general_alpha_achievability}.

\begin{remark}
In this case, when $d\ge 2$ and $0<r<\pi$, $R(r)$ is strictly decreasing in $r$. We have
\begin{align}
0<R\left(\frac r2\right)-R(r)
=\log_2\frac{V_g(B_{r/2})}{V_g(B_{r/4})}<d.
\end{align}
Therefore, increasing the communication rate by $d$ bits per source symbol can reduce the distortion radius by more than half. Furthermore, we have
\begin{align}
\lim_{r\downarrow0}\left[R\left(\frac r2\right)-R(r)\right]=d.
\end{align}
Thus, the rate--radius relationship asymptotically approaches the scaling law that an additional $d$ bits halve the distortion radius.
\end{remark}
\begin{figure}[t]
\centering
\includegraphics[width=0.48\textwidth]{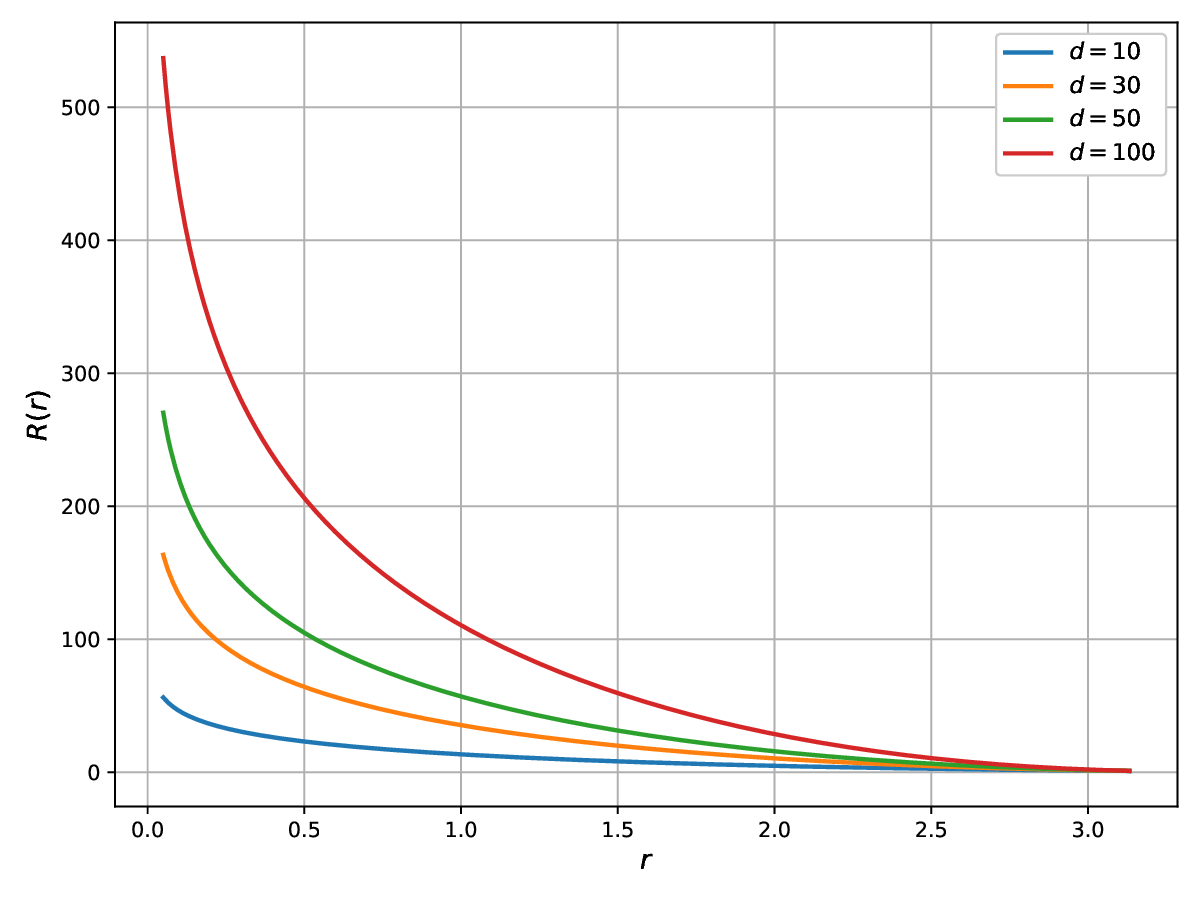}
\caption{The rate $R(r)$ versus radius $r$ for the spherical manifold with different intrinsic dimensions.}
\label{fig_small_radius_approximation}
\end{figure}

\section{Conclusion}
This paper investigates how much source-specific
information must be communicated to support reliable
stochastic generation while preserving the conditional
source distribution. For sources on compact Riemannian
manifolds, we formulate an operational GRDP function
under geodesic distortion and failure-probability
constraints without common randomness, and establish
its single-letter characterization. This characterization
quantifies the minimum communication rate required
to constrain generation to meet prescribed fidelity
and reliability requirements.

In the high-reliability limit, we derive a geometric
lower bound determined by the largest source probability
mass of a region with bounded geodesic diameter, together
with an upper bound based on probabilistic coverings
of the source manifold. Small-radius analysis of the
lower bound identifies intrinsic dimension as governing
its leading rate scaling and scalar curvature as
contributing a correction. For uniformly distributed
sources on compact homogeneous Riemannian manifolds,
the bounds coincide, yielding an exact characterization
that can be explicitly evaluated for spherical manifolds.
These results connect the information required for
reliable generation to the geometric concentration
of the source distribution, clarifying how communication
constrains the uncertainty of stochastic generation.

\appendices
\section{Proof of Theorem~\ref{theorem:coding_theorem}}\label{app:coding_theorem}
\begin{lemma}[Finite-Alphabet Posterior-Reference Coding]\label{lem:posterior-reference}
Let $P_V$ and $P_{Y|V}$ denote the output distribution and the backward test channel induced by $P_Y$ and $P_{V|Y}$, respectively, where $\mathcal Y$ and $\mathcal V$ are finite alphabets. For any $\delta>0$, for all sufficiently large $n$, there exists a stochastic mapping $\psi^{(n)}:\mathcal Y^n\to G^{(n)}\cup B^{(n)}\subseteq\mathcal V^n$
mapping $Y^n$ to $V^n$ according to some conditional distribution $P_{V^n|Y^n}$, with 
$G^{(n)}\subseteq T_{\delta}^{(n)}(P_V)$ and $G^{(n)}\cap B^{(n)}=\varnothing,$
such that
\begin{align}
&\frac{1}{n}\left\lceil\log_2\left|G^{(n)}\cup B^{(n)}\right|\right\rceil\le I(Y;V)+\delta,\\
&\Pr\{V^n\in B^{(n)}\}\le\delta,\\
&d_{\mathrm{TV}}\left(P_{Y^n|V^n}(\cdot|v^n),P_{Y|V}^{\otimes n}(\cdot|v^n)\right)\le\delta, \quad v^n \in G^n
\end{align}
Here,
\begin{align}
P_{Y|V}^{\otimes n}(y^n|v^n):=\prod_{t=1}^{n}P_{Y|V}(y_t|v_t).
\end{align}
\end{lemma}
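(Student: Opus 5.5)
We will construct $\psi^{(n)}$ by a likelihood-encoder (soft-covering) construction: draw a random codebook from the typical set, encode with the posterior, and send non-typical inputs to a separate error set. This is the finite-alphabet posterior reference map of \cite{atif2024posterior,salehkalaibar2024rdp}.

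\textbf{Codebook and error set.} Fix $\delta>0$ and a rate $R'=I(Y;V)+\delta/2$. Draw $M=2^{nR'}$ codewords $V^n(1),\ldots,V^n(M)$ i.i.d.\ from $P_V^{\otimes n}$ conditioned on $T^{(n)}_{\delta'}(P_V)$, with $\delta'\ll\delta$. Let $G^{(n)}$ be the set of distinct codewords, so $G^{(n)}\subseteq T^{(n)}_{\delta}(P_V)$. Let $B^{(n)}$ be a small set of dummy sequences outside $T^{(n)}_{\delta}(P_V)$; it stays small even if it is a labelled copy of part of $\mathcal Y^n$ restricted to non-typical inputs. Since $|B^{(n)}|\le 2^{nR'}$ suffices, we get $|G^{(n)}\cup B^{(n)}|\le 2^{nR'+1}$, and the rate bound $\frac1n\lceil\log_2|\cdot|\rceil\le I(Y;V)+\delta$ follows for large $n$.

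\textbf{Encoder.} Given $y^n$, choose index $m$ with probability proportional to $P_{Y|V}^{\otimes n}(y^n|V^n(m))$ and output $V^n(m)$. The induced joint law is
\[
\tilde P(y^n,m)=\frac1M P_{Y|V}^{\otimes n}(y^n|V^n(m)).
\]
Under this ideal law, the posterior of $Y^n$ given the codeword is exactly $P_{Y|V}^{\otimes n}(\cdot|v^n)$. What remains is to replace the ideal law by the true source.

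\textbf{Soft covering.} By the soft-covering lemma, which holds since $R'>I(Y;V)$, the $\tilde P$-marginal on $\mathcal Y^n$ satisfies $d_{\mathrm{TV}}(\tilde P_{Y^n},P_Y^{\otimes n})\le e^{-cn}$ in expectation over the codebook. The actual code uses the true source $P_Y^{\otimes n}$ and the same encoder kernel $\tilde P_{M|Y^n}$. The actual joint law and $\tilde P$ therefore differ in total variation by at most $e^{-cn}$.

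\textbf{From average to per-codeword posteriors.} Averaged over $m$, the actual posterior $P_{Y^n|V^n=v^n(m)}$ is close to $P_{Y|V}^{\otimes n}(\cdot|v^n(m))$; averaged closeness is what the TV bound gives. The lemma, however, demands closeness for \emph{every} $v^n\in G^{(n)}$, so I expect this to be the main obstacle.

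I would handle it by Markov's inequality. The set of codewords whose posterior error exceeds $\delta$, or whose selection probability deviates from $1/M$ by a large factor, has total probability at most $e^{-cn}/\delta$. I would reroute these bad codewords: whenever the encoder would pick one, it instead outputs a fresh element of $B^{(n)}$, and they are removed from $G^{(n)}$. Inputs $y^n$ outside $T^{(n)}_{\delta'}(P_Y)$ are also mapped to $B^{(n)}$.

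This rerouting only conditions the remaining good codewords' posteriors on the event that no reroute occurred, which changes each good posterior by a negligible amount. The cleanest way to see this is to make rerouting depend only on the chosen index, so that each good posterior is unchanged exactly. Then $\Pr\{V^n\in B^{(n)}\}\le \Pr\{Y^n\notin T\}+e^{-cn}/\delta\le\delta$.

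\textbf{Conclusion.} Finally, a codebook achieving the expected bounds exists by selection, which proves all three claims.
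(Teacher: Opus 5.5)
The paper does not prove this lemma itself; it imports it from \cite{salehkalaibar2024rdp,atif2024posterior}. You instead give a self-contained argument, and its core is correct. The actual and ideal joint laws share the encoder kernel $\tilde P_{M|Y^n}$ (extended arbitrarily where all likelihoods vanish), so their total-variation distance equals $d_{\mathrm{TV}}(P_Y^{\otimes n},\tilde P_{Y^n})$. Soft covering at rate $I(Y;V)+\delta/2$ makes this vanish; the typical-conditioned codebook only adds $P_V^{\otimes n}(T^{(n)}_{\delta'}(P_V)^c)$. Consequently $\sum_m P_M(m)\,d_{\mathrm{TV}}(P_{Y^n|M=m},P_{Y|V}^{\otimes n}(\cdot|v^n(m)))$ is at most twice that distance, and Markov's inequality isolates the indices with error above $\delta$. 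Compared with the citation, your route makes explicit that only soft covering plus an expurgation step are needed. The per-codeword guarantee is paid for by moving a vanishing amount of probability into $B^{(n)}$.

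The bookkeeping around $B^{(n)}$ needs tidying. First, sending atypical $y^n$ to $B^{(n)}$ depends on $y^n$, not on the chosen index, so it does perturb the good posteriors, contrary to your "unchanged exactly" claim. The perturbation is $\Pr\{Y^n\notin T^{(n)}_{\delta'}(P_Y)\mid M=m\}$. With the same $\delta'$ for $V$ and $Y$, this can be of constant order for good codewords near the boundary of $T^{(n)}_{\delta'}(P_V)$. The lemma says nothing about atypical inputs, so simply drop that step.

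Second, you need neither rerouting nor dummy sequences. Duplicated codewords receive identical likelihood-encoder weights and hence identical posteriors, so each codeword value is entirely good or entirely bad. You can therefore take $G^{(n)}$ to be the good values and $B^{(n)}$ the bad ones. This gives $|G^{(n)}\cup B^{(n)}|\le M$ and leaves the good posteriors untouched.

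Your version, with up to $2^{nR'}$ dummies outside $T^{(n)}_{\delta}(P_V)$, may be unsatisfiable. For uniform $P_V$ with $I(Y;V)$ within $\delta/2$ of $H(V)$, $2^{nR'}$ already exceeds $|\mathcal V|^n$. Also, $B^{(n)}$ cannot be a labelled copy of part of $\mathcal Y^n$, since it must lie in $\mathcal V^n$. With these changes, your argument proves the lemma as stated.
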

\begin{proof}
    The lemma is proven in \cite{salehkalaibar2024rdp,atif2024posterior}
\end{proof}

We first prove
\begin{align}
R_{\mathrm C}(r,\epsilon)\le R(r,\epsilon).
\end{align}

%Fix an arbitrary $\gamma>0$. We shall construct a sequence of stochastic block codes satisfying the operational constraints in Definition \ref{def:achieveable_rate} and having asymptotic rate no larger than $R(r,\epsilon)+\gamma$.
\textbf{Step A: Introduce a strict quantile slack.}

%We first establish that a strict slack in the quantile constraint can be introduced without asymptotic rate loss. 
Consider two feasible single-letter tuples $({\mathbf{X}},\hat U_1,\hat {\mathbf{X}}_1)$ and $({\mathbf{X}},\hat U_2,\hat {\mathbf{X}}_2)$ satisfying
$\Pr\{d_g({\mathbf{X}},\hat {\mathbf{X}}_i)\ge r\}\le\epsilon_i,\ i\in\{1,2\}.$
Let $Q\in\{1,2\}$ be independent of ${\mathbf{X}}$ with
$\Pr\{Q=1\}=1-\lambda$ and $\Pr\{Q=2\}=\lambda.$
Define $\hat U':=(Q,\hat U_Q)$ and $\hat {\mathbf{X}}':=\hat {\mathbf{X}}_Q.$
Then
\begin{align}
\Pr\{d_g({\mathbf{X}},\hat {\mathbf{X}}')\ge r\}\le(1-\lambda)\epsilon_1+\lambda\epsilon_2.
\end{align}
Moreover, since $Q$ is independent of ${\mathbf{X}}$,
\begin{align}
I({\mathbf{X}};\hat U')=(1-\lambda)I({\mathbf{X}};\hat U_1)+\lambda I({\mathbf{X}};\hat U_2).
\end{align}
The conditional-distribution preservation constraint is also satisfied. Hence, $R(r,\epsilon)$ is convex in $\epsilon$.
 
Furthermore, since $(\mathcal M,g)$ is compact and $d_g$ is the geodesic distance induced by $g$, $(\mathcal M,d_g)$ is a compact metric space and hence totally bounded. Therefore, for every $r>0$, there exists a finite Borel partition $\{C_1,\ldots,C_N\}$ such that
\begin{align}
\max_{1\le i\le N}\operatorname{diam}_{d_g}(C_i)<r.
\end{align}
Let $\hat U$ be the corresponding cell index and define the reconstruction according to
$P_{\hat {\mathbf{X}}|\hat U}=P_{{\mathbf{X}}|\hat U}.$
Conditioned on $\hat U$, both ${\mathbf{X}}$ and $\hat {\mathbf{X}}$ lie in the same partition cell almost surely. Hence,
\begin{align}
\Pr\{d_g({\mathbf{X}},\hat {\mathbf{X}})\ge r\}=0.
\end{align}
Furthermore,
\begin{align}
I({\mathbf{X}};\hat U)\le H(\hat U)\le\log_2N<\infty.
\end{align}
Consequently,
\begin{align}
R(r,0)<\infty.
\end{align}
Since $R(r,\epsilon)$ is finite and convex on $(0,1)$, it is continuous on $(0,1)$. In particular,
\begin{align}
\lim_{\xi\downarrow0}R(r,\epsilon-\xi)=R(r,\epsilon).
\end{align}

Therefore, there exists $\epsilon_0\in(0,\epsilon)$ such that
\begin{align}
R(r,\epsilon_0)\le R(r,\epsilon)+\frac{ \varsigma  }{4}.
\end{align}
By the definition of the infimum, there exists a feasible single-letter tuple $({\mathbf{X}},\hat U,\hat {\mathbf{X}})$ satisfying
\begin{align}
&{\mathbf{X}}-\hat U-\hat {\mathbf{X}},\\
&P_{\hat {\mathbf{X}}|\hat U}=P_{{\mathbf{X}}|\hat U}\qquad P_{\hat U}\text{-a.s.},\\
&\Pr\{d_g({\mathbf{X}},\hat {\mathbf{X}})\ge r\}\le\epsilon_0,\\
&I({\mathbf{X}};\hat U)\le R(r,\epsilon_0)+\frac{ \varsigma}{4}\le R(r,\epsilon)+\frac{ \varsigma}{2}.
\end{align}
We fix this tuple for the remainder of the achievability proof.

\textbf{Step B: quantize the compact manifold}. 

Since $(\mathcal M,d_g)$ is compact, there exists a sequence of finite Borel partitions
\begin{align}
\mathcal P_k=\{C_{k,1},\ldots,C_{k,N_k}\},
\end{align}
whose maximal geodesic diameters
\begin{align}
a_k:=\max_{1\le i\le N_k}\operatorname{diam}_{d_g}(C_{k,i})
\end{align}
satisfy
\begin{align}
\lim_{k\to\infty}a_k=0.
\end{align}
Let
\begin{align}
Q_k:\mathcal M\to\mathcal Y_k:=\{1,\ldots,N_k\}
\end{align}
denote the corresponding quantizer, and define
\begin{align}
Y_k:= Q_k({\mathbf{X}}),\qquad \hat Y_k:= Q_k(\hat {\mathbf{X}}).
\end{align}

Since $P_{\hat {\mathbf{X}}|\hat U}=P_{{\mathbf{X}}|\hat U}$, for every $i\in\mathcal Y_k$,
\begin{align}
P_{\hat Y_k|\hat U=u}(i)=P_{Y_k|\hat U=u}(i)\qquad P_{\hat U}\text{-a.s.}
\end{align}
Hence,
\begin{align}
P_{\hat Y_k|\hat U}=P_{Y_k|\hat U}.
\end{align}
Furthermore, since $Y_k$ is a deterministic function of ${\mathbf{X}}$, the data-processing inequality gives
\begin{align}
I(Y_k;\hat U)\le I({\mathbf{X}};\hat U),
\end{align}
and consequently,
\begin{align}
Y_k-\hat U-\hat Y_k,
\end{align}
\textbf{Step C: Define the cost on the quantized space $\mathcal{Y}_k$}

We define the cost on $\mathcal{Y}_k$ that upper-bounds the original excess-distortion indicator.

For each $k$, define 
\begin{align}
\Delta_{r,k}(i,j):=\mathbf{1}\left\{\sup_{\mathbf{x}\in C_{k,i}, \hat{\mathbf{x}}\in C_{k,j}}d_g(\mathbf{x},\hat{\mathbf{x}})\ge r\right\}.
\end{align}
Therefore, for every $\mathbf{x},\hat{\mathbf{x}}\in\mathcal M$,
\begin{align}
\mathbf{1}\{d_g(\mathbf{x},\hat{\mathbf{x}})\ge r\}\le\Delta_{r,k}(Q_k(\mathbf{x}),Q_k(\hat{\mathbf{x}})).
\end{align}

We next establish the pointwise convergence of the robust threshold loss. Fix $(\mathbf{x},\hat{\mathbf{x}})\in\mathcal M\times\mathcal M$. If $d_g(\mathbf{x},\hat{\mathbf{x}})\ge r$, then
\begin{align}
\mathbf{1}\{d_g(\mathbf{x},\hat{\mathbf{x}})\ge r\}=1, \ \Delta_{r,k}(Q_k(\mathbf{x}),Q_k(\hat{\mathbf{x}}))=1,\ \forall k.
\end{align}
If, on the other hand, $d_g(\mathbf{x},\hat{\mathbf{x}})<r$, meaning that $\mathbf{1}\{d_g(\mathbf{x},\hat{\mathbf{x}})\ge r\}=0$, we define
\begin{align}
\eta(\mathbf{x},\hat{\mathbf{x}}):= r-d_g(\mathbf{x},\hat{\mathbf{x}})>0.
\end{align}
Since $\lim_{k\to\infty}a_k=0$, there exists $K(\mathbf{x},\hat{\mathbf{x}})$ such that, for every $k\ge K(\mathbf{x},\hat{\mathbf{x}})$,
\begin{align}
2a_k<\eta(\mathbf{x},\hat{\mathbf{x}}).
\end{align}
For every $\mathbf{x}'\in C_{k,Q_k(\mathbf{x})}$ and $\hat{\mathbf{x}}'\in C_{k,Q_k(\hat{\mathbf{x}})}$, the triangle inequality for the geodesic distance gives
\begin{align}
d_g(\mathbf{x}',\hat{\mathbf{x}}')&\le d_g(\mathbf{x}',\mathbf{x})+d_g(\mathbf{x},\hat{\mathbf{x}})+d_g(\hat{\mathbf{x}},\hat{\mathbf{x}}')\leq d_g(\mathbf{x},\hat{\mathbf{x}})+2a_k<r.
\end{align}
Hence,
\begin{align}
\Delta_{r,k}(Q_k(\mathbf{x}),Q_k(\hat{\mathbf{x}}))=0
\end{align}
for all sufficiently large $k$. Therefore,
\begin{align}
\lim_{k\to\infty}\Delta_{r,k}(Q_k(\mathbf{x}),Q_k(\hat{\mathbf{x}}))=\mathbf{1}\{d_g(\mathbf{x},\hat{\mathbf{x}})\ge r\}.
\end{align}
Since $0\le\Delta_{r,k}\le1$, the bounded convergence theorem yields
\begin{align}
\lim_{k\to\infty}\mathbb E[\Delta_{r,k}(Y_k,\hat Y_k)]=\Pr\{d_g({\mathbf{X}},\hat {\mathbf{X}})\ge r\}.
\end{align}
Since the right-hand side is at most $\epsilon_0<\epsilon$, there exists a finite $k$ such that
\begin{align}
D_k:=\mathbb E[\Delta_{r,k}(Y_k,\hat Y_k)]<\epsilon.
\end{align}
Fix such a $k$, and write
\begin{align}
Y:= Y_k,\qquad \hat Y:=\hat Y_k,\qquad \Delta:=\Delta_{r,k}.
\end{align}
Then
\begin{align}
P_{\hat Y|\hat U}=P_{Y|\hat U},
\end{align}
\begin{align}
\mathbb E[\Delta(Y,\hat Y)]=D_k<\epsilon,
\end{align}
and
\begin{align}
I(Y;\hat U)\le I({\mathbf{X}};\hat U).
\end{align}

\textbf{Step D: replace the possibly nonfinite auxiliary variable $\hat U$ by a finite auxiliary variable}

For each $u$, define
\begin{align}
q_u(y):= P_{Y|\hat U=u}(y), \ h(q):=-\sum_{y\in\mathcal Y}q(y)\log_2q(y).
\end{align}
Since $P_{\hat Y|\hat U}=P_{Y|\hat U}$ and $Y-\hat U-\hat Y$, conditioned on $\hat U=u$, the random variables $Y$ and $\hat Y$ are independent and identically distributed according to $q_u$. Define also
\begin{align}
c(q_u):=\mathbb E[\Delta(Y,\hat Y)\mid \hat U=u]=\sum_{y,\hat y\in\mathcal Y}q_u(y)q_u(\hat y)\Delta(y,\hat y).
\end{align}
Then
\begin{align}
H(Y|\hat U)=\mathbb E[h(q_{\hat U})], \qquad D_k=\mathbb E[c(q_{\hat U})].
\end{align}

We only need to perserve the $N_k+1$ continuous functions consisting of the first $N_k-1$ coordinates of $q$, together with $h(q)$ and $c(q)$. By the support lemma \cite[Appendix~C]{ElGamalKim2011}, there exists a finite auxiliary random variable $V$ satisfying
\begin{align}
|\mathcal V|\le N_k+2
\end{align}
such that the marginal distribution $P_Y$, the conditional entropy, and the expected cost are preserved. In particular,
\begin{align}
H(Y|V)=H(Y|\hat U).
\end{align}
Hence,
\begin{align}
I(Y;V)=I(Y;\hat U).
\end{align}
Define the reconstruction conditionally on $V$ according to
\begin{align}
P_{\hat Y|V}=P_{Y|V},
\end{align}
with $Y$ and $\hat Y$ conditionally independent given $V$. By construction,
\begin{align}
\mathbb E[\Delta(Y,\hat Y)]=D_k.
\end{align}
Consequently,
\begin{align}
I(Y;V)=I(Y;\hat U)\le I({\mathbf{X}};\hat U).
\end{align}

%------------------%
\textbf{Step E: achievability on the quantized space}
By Lemma~\ref{lem:posterior-reference}, we can define $\mathcal{C}^{(n)}:=G^{(n)}\cup B^{(n)}$, and assign a distinct binary message $E\in\mathcal E$ to each codeword $v^n\in\mathcal C^{(n)}$. Then
\begin{align}
\frac{1}{n}\mathbb E[\ell(E)]\le I(Y;V)+\delta.
\end{align}

Given $V^n=v^n$, define the decoder by posterior resampling,
\begin{align}
P_{\hat Y^n|V^n=v^n}:=P_{Y^n|V^n=v^n}.
\end{align}
Since $E$ and $V^n$ are in one-to-one correspondence,
\begin{align}
P_{\hat Y^n|E}=P_{Y^n|E}\qquad P_{E}\text{-a.s.}
\end{align}

It remains to verify the finite-alphabet quantile constraint. Define
\begin{align}
F_n(Y^n,\hat Y^n):=\frac{1}{n}\sum_{t=1}^{n}\Delta(Y_t,\hat Y_t).
\end{align}
such that 
\begin{align}
    &\mathbb E_{P_{Y^n\hat Y^n|V^n=v^n}}\left[F_n\right]=\mathbb E[F_n|V^n=v^n]\\
    &\mathbb E_{P_{Y|V}^{\otimes n}(\cdot|v^n)\otimes P_{\hat{Y}|V}^{\otimes n}(\cdot|v^n)}\left[F_n\right]=\frac{1}{n}\sum_{t=1}^{n}\mathbb E[\Delta(Y,\hat Y)|V=v_t].\nonumber
\end{align}
For $v^n\in G^{(n)}$, posterior resampling and the total-variation bound in Lemma~\ref{lem:posterior-reference} imply
\begin{align}
d_{\mathrm{TV}}\left(P_{Y^n\hat Y^n|V^n=v^n},P_{Y|V}^{\otimes n}(\cdot|v^n)\otimes P_{\hat{Y}|V}^{\otimes n}(\cdot|v^n)\right)\le2\delta.
\end{align}
Therefore, since $0\le F_n\le1$,
\begin{align}
\mathbb E[F_n|V^n=v^n]\le\frac{1}{n}\sum_{t=1}^{n}\mathbb E[\Delta(Y,\hat Y)|V=v_t]+2\delta.
\end{align}
Since $v^n\in G^{(n)}\subseteq T_{\delta}^{(n)}(P_V)$, the typical average lemma implies that there exists a nonnegative function $\beta(\delta)$ satisfying
\begin{align}
\lim_{\delta\downarrow0}\beta(\delta)=0
\end{align}
such that
\begin{align}
\frac{1}{n}\sum_{t=1}^{n}\mathbb E[\Delta(Y,\hat Y)|V=v_t]\le D_k+\beta(\delta).
\end{align}
Hence, for every $v^n\in G^{(n)}$,
\begin{align}
\mathbb E[F_n|V^n=v^n]\le D_k+\beta(\delta)+2\delta.
\end{align}
For $v^n\in B^{(n)}$, we use only $F_n\le1$. Together with $\Pr\{V^n\in B^{(n)}\}\le\delta$, this gives
\begin{align}
\mathbb E[F_n]\le D_k+\beta(\delta)+3\delta.
\end{align}

Since $D_k<\epsilon$ and
$\lim_{\delta\downarrow0}\left[\beta(\delta)+3\delta\right]=0,$
we may choose $\delta>0$ sufficiently small such that
\begin{align}
D_k+\beta(\delta)+3\delta<\epsilon, \quad \delta<\frac{ \varsigma}{2}.
\end{align}
Therefore, for all sufficiently large $n$,
\begin{align}
\frac{1}{n}\sum_{t=1}^{n}\mathbb E[\Delta(Y_t,\hat Y_t)]<\epsilon.
\end{align}

\textbf{Step F: lift the finite-alphabet reconstruction back to the manifold $\mathcal M$}

Define the probability kernel
\begin{align}
K(B|y):= P_{{\mathbf{X}}|Y}(B|y)
\end{align}
for every Borel set $B\subseteq\mathcal M$ and every $y$ satisfying $P_Y(y)>0$. For $y$ satisfying $P_Y(y)=0$, choose any probability measure supported on $C_{k,y}$. Then
\begin{align}
K(C_{k,y}|y)=1,\qquad \forall y\in\mathcal Y.
\end{align}
Since the source is memoryless and the quantizer acts symbolwise,
\begin{align}
P_{{\mathbf{X}}^n|Y^n=y^n}=\prod_{t=1}^{n}K(dx_t|y_t)
\end{align}
for every $y^n$ having positive probability. Moreover, the finite-alphabet encoder depends on ${\mathbf{X}}^n$ only through $Y^n$, and hence
\begin{align}
{\mathbf{X}}^n-Y^n-E.
\end{align}

After producing $\hat Y^n$, let the decoder independently generate the continuous reconstruction from the same kernel according to
\begin{align}
P_{\hat {\mathbf{X}}^n|\hat Y^n=\hat y^n}:=\prod_{t=1}^{n}K(d\hat{\mathbf{x}}_t|\hat y_t).
\end{align}
For every $e\in\mathcal E$ satisfying $P_{E}(e)>0$,
\begin{align}
P_{{\mathbf{X}}^n|E=e}=\sum_{y^n}P_{Y^n|E=e}(y^n)\prod_{t=1}^{n}K(dx_t|y_t).
\end{align}
Similarly,
\begin{align}
P_{\hat {\mathbf{X}}^n|E=e}=\sum_{\hat y^n}P_{\hat Y^n|E=e}(\hat y^n)\prod_{t=1}^{n}K(d\hat{\mathbf{x}}_t|\hat y_t).
\end{align}
Since $P_{\hat Y^n|E}=P_{Y^n|E}$, we have 
\begin{align}
P_{\hat {\mathbf{X}}^n|E}=P_{{\mathbf{X}}^n|E}\qquad P_{E}\text{-a.s.}
\end{align}
Consequently,
\begin{align}
P_{\hat {\mathbf{X}}^n}=P_{{\mathbf{X}}^n}=P_{\mathbf{X}}^{\otimes n}.
\end{align}

By the construction of the lifting kernel,
\begin{align}
{\mathbf{X}}_t\in C_{k,Y_t},\qquad \hat {\mathbf{X}}_t\in C_{k,\hat Y_t}\qquad \text{a.s.}
\end{align}
Therefore, by the definition of the robust finite-alphabet loss,
\begin{align}
\mathbf{1}\{d_g({\mathbf{X}}_t,\hat {\mathbf{X}}_t)\ge r\}\le\Delta(Y_t,\hat Y_t)\qquad \text{a.s.}
\end{align}
Taking expectations and averaging over $t$ gives, for every $n\ge N_ \varsigma$,
\begin{align}
\frac{1}{n}\sum_{t=1}^{n}\Pr\{d_g({\mathbf{X}}_t,\hat {\mathbf{X}}_t)\ge r\}\le\frac{1}{n}\sum_{t=1}^{n}\mathbb E[\Delta(Y_t,\hat Y_t)]<\epsilon.
\end{align}

The communication rate satisfies, for every $n\ge N_ \varsigma$,
\begin{align}
\frac{1}{n}\mathbb E[\ell(E)]\le I(Y;V)+\delta.
\end{align}
Using the preceding mutual-information identities and inequalities,
\begin{align}
I(Y;V)=I(Y;\hat U)\le I({\mathbf{X}};\hat U)\le R(r,\epsilon)+\frac{ \varsigma}{2}.
\end{align}
Since $\delta< \varsigma/2$,
\begin{align}
\frac{1}{n}\mathbb E[\ell(E)]<R(r,\epsilon)+ \varsigma,\qquad \forall n\ge N_ \varsigma.
\end{align}

For the finitely many blocklengths $n$, choose $E$ to be constant and let the decoder generate $\hat {\mathbf{X}}^n$ independently according to $P_{\mathbf{X}}^{\otimes n}$. Then
\begin{align}
P_{\hat {\mathbf{X}}^n|E}=P_{{\mathbf{X}}^n|E}=P_{\mathbf{X}}^{\otimes n}
\end{align}
also holds for these blocklengths. Hence, the above construction defines a sequence of stochastic block codes satisfying the exact conditional-distribution constraint for every $n$.

Since the rate bound holds for every $n\ge N_ \varsigma$,
\begin{align}
\limsup_{n\to\infty}\frac{1}{n}\mathbb E[\ell(E)]\le R(r,\epsilon)+ \varsigma.
\end{align}
Similarly,
\begin{align}
\limsup_{n\to\infty}\frac{1}{n}\sum_{t=1}^{n}\Pr\{d_g({\mathbf{X}}_t,\hat {\mathbf{X}}_t)\ge r\}\le\epsilon.
\end{align}
Therefore, $R(r,\epsilon)+ \varsigma$ is achievable according to Definition~1. Since $ \varsigma>0$ is arbitrary,
\begin{align}
R_{\mathrm C}(r,\epsilon)\le\inf_{ \varsigma>0}\{R(r,\epsilon)+ \varsigma\}=R(r,\epsilon).
\end{align}

\textbf{Converse:}
Consider any sequence of stochastic block codes achieving rate $R$ for
$(r,\epsilon)$. For each $n$, define
\begin{align}
\epsilon_n:=\frac{1}{n}\sum_{t=1}^{n}
\Pr\{d_g({\mathbf{X}}_t,\hat {\mathbf{X}}_t)\ge r\}.
\end{align}
By the operational distortion constraint, $\limsup_{n\to\infty}\epsilon_n\le\epsilon$.

By the standard time-sharing argument and proof method for converse problem, we have 
\begin{align}
\frac{1}{n}\mathbb E[\ell(E)]\ge I({\mathbf{X}};\hat U_n)\ge R(r,\epsilon_n).
\end{align}

For every $\eta>0$, $\limsup_{n\to\infty}\epsilon_n\le\epsilon$ implies
that $\epsilon_n\le\epsilon+\eta$ for all sufficiently large $n$.
Since $R(r,\epsilon)$ is nonincreasing in $\epsilon$,
\begin{align}
R(r,\epsilon_n)\ge R(r,\epsilon+\eta)
\end{align}
for all sufficiently large $n$. Consequently,
\begin{align}
R\ge\limsup_{n\to\infty}\frac{1}{n}\mathbb E[\ell(E)]\ge R(r,\epsilon+\eta).
\end{align}
Letting $\eta\downarrow0$ and using the continuity of $R(r,\epsilon)$ on $(0,1)$ gives
\begin{align}
R \ge R(r,\epsilon).
\end{align}
Taking the infimum over all achievable rates $R$ yields
\begin{align}
R_{\mathrm C}(r,\epsilon)\ge R(r,\epsilon).
\end{align}
Combining this with the achievability result gives
\begin{align}
R_{\mathrm C}(r,\epsilon)=R(r,\epsilon).
\end{align}

\section{Proof of Proposition~\ref{pro:constraint_covering}\label{app:constraint_covering}}
To handle the limiting case as $\tau \downarrow 0$, we first construct a sequence. Let $\{\tau_n\}_{n\ge1}$ be an arbitrary sequence satisfying
\begin{align}
\lim_{n\rightarrow\infty}{\tau_n}=0.
\end{align}
If
\begin{align}
\liminf_{n\to\infty}\psi_r(\tau_n)=+\infty,
\end{align}
the desired inequality holds immediately. Otherwise, there always exists a subsequence, still denoted by $\{\tau_n\}_{n\ge1}$ for notational simplicity, such that
\begin{align}
\lim_{n\to\infty}\psi_r(\tau_n)=\liminf_{n\to\infty}\psi_r(\tau_n).
\end{align}
For each $n$, by the definition of the infimum in $\psi_r(\tau_n)$, we can choose a nearly optimal $Q_n$ on $\mathcal M$ satisfying
\begin{align}
\int_{\mathcal M}\int_{\mathcal M}\mathbf 1 \left\{d_g(\mathbf x, \hat{\mathbf{x}})\ge r\right\}Q_n(d\mathbf x)Q_n(d \hat{\mathbf{x}})\le\tau_n,
\end{align}
and
\begin{align}
D_{KL}(Q_n\|P_{\mathbf X})\le\psi_r(\tau_n)+\frac{1}{n}.
\end{align}
Thus, $\{Q_n\}_{n\ge1}$ is a sequence of probability distributions. %whose pairwise excess-distortion probabilities approach zero and whose relative entropies approach the corresponding infima.

Since $\mathcal M$ is compact, every sequence of probability distributions on $\mathcal M$ admits a weakly convergent subsequence. Hence, after passing to a further subsequence and relabeling it again as $\{Q_n\}_{n\ge1}$, there exists a probability distribution $Q$ on $\mathcal M$ such that, for every bounded continuous function $f:\mathcal M\to\mathbb R$,
\begin{align}
\lim_{n\to\infty}\int_{\mathcal M}f(\mathbf x)Q_n(d\mathbf x)=\int_{\mathcal M}f(\mathbf x)Q(d\mathbf x).
\end{align}
This weak convergence further implies that, for every bounded continuous function $h:\mathcal M\times\mathcal M\to\mathbb R$,
\begin{align}
\lim_{n\to\infty}\int_{\mathcal M}\int_{\mathcal M}h(\mathbf x, \hat{\mathbf{x}})Q_n(d\mathbf x)Q_n(d \hat{\mathbf{x}})
=\int_{\mathcal M}\int_{\mathcal M}h(\mathbf x, \hat{\mathbf{x}})Q(d\mathbf x)Q(d \hat{\mathbf{x}}).
\end{align}

Now define
\begin{align}
G_r:=\left\{(\mathbf x, \hat{\mathbf{x}})\in\mathcal M\times\mathcal M:d_g(\mathbf x, \hat{\mathbf{x}})>r\right\}.
\end{align}
Since $d_g$ is continuous, $G_r$ is open. Moreover,
\begin{align}
\left\{(\mathbf x, \hat{\mathbf{x}}):d_g(\mathbf x, \hat{\mathbf{x}})>r\right\}\subseteq\left\{(\mathbf x, \hat{\mathbf{x}}):d_g(\mathbf x, \hat{\mathbf{x}})\ge r\right\}.
\end{align}
Therefore, for every $n$,
\begin{align}
\int_{\mathcal M}\int_{\mathcal M}\mathbf 1\left\{d_g(\mathbf x, \hat{\mathbf{x}})>r\right\}Q_n(d\mathbf x)Q_n(d \hat{\mathbf{x}})
\le\int_{\mathcal M}\int_{\mathcal M}\mathbf 1\left\{d_g(\mathbf x, \hat{\mathbf{x}})\ge r\right\}Q_n(d\mathbf x)Q_n(d \hat{\mathbf{x}})\le\tau_n.
\end{align}
Applying the Portmanteau theorem to the open set $G_r$ yields
\begin{align}
\int_{\mathcal M}\int_{\mathcal M}\mathbf 1 \left\{d_g(\mathbf x, \hat{\mathbf{x}})>r\right\}Q(d\mathbf x)Q(d \hat{\mathbf{x}})
\le\liminf_{n\to\infty}\int_{\mathcal M}\int_{\mathcal M}\mathbf 1 \left\{d_g(\mathbf x, \hat{\mathbf{x}})>r\right\}Q_n(d\mathbf x)Q_n(d \hat{\mathbf{x}})
\le\liminf_{n\to\infty}{\tau_n}
=0
\end{align}
Since the integral with respect to $Q$ is nonnegative, we have
\begin{align}\label{eq:Q0}
\int_{\mathcal M}\int_{\mathcal M}\mathbf 1 \left\{d_g(\mathbf x, \hat{\mathbf{x}})>r\right\}Q(d\mathbf x)Q(d \hat{\mathbf{x}})=0.
\end{align}
We claim that the support of $Q$ satisfies
\begin{align}
\operatorname{diam}_g(\operatorname{supp}Q)\le r.
\end{align}
Suppose otherwise. Then there exist $\mathbf x_1,\mathbf x_2\in\operatorname{supp}Q$ such that
\begin{align}
d_g(\mathbf x_1,\mathbf x_2)>r.
\end{align}
By continuity of $d_g$, there exist open neighborhoods $O_1$ and $O_2$ of $\mathbf x_1$ and $\mathbf x_2$, respectively, such that
\begin{align}
d_g(\mathbf{x}_1',\mathbf{x}_2')>r,\qquad \forall(\mathbf{x}_1',\mathbf{x}_2')\in O_1\times O_2.
\end{align}
Since $\mathbf x_1,\mathbf x_2\in\operatorname{supp}Q$, we have
\begin{align}
Q(O_1)>0,\qquad Q(O_2)>0.
\end{align}
Therefore,
\begin{align}
\int_{\mathcal M}\int_{\mathcal M}\mathbf 1 \left\{d_g(\mathbf x, \hat{\mathbf{x}})>r\right\}Q(d\mathbf x)Q(d \hat{\mathbf{x}})\ge Q(O_1)Q(O_2)>0,
\end{align}
which contradicts \eqref{eq:Q0}. Hence,
\begin{align}
\operatorname{diam}_g(\operatorname{supp}Q)\le r.
\end{align}
Therefore, the definition of $\beta_r$ gives
\begin{align}
P_{\mathbf X}(\operatorname{supp}Q)\le\beta_r.
\end{align}

Next, for any probability measures $Q'$ and $P'$, if $Q'(A)=1$, the data-processing inequality for relative entropy under the mapping $\mathbf x\mapsto\mathbf 1\{\mathbf x\in A\}$ yields
\begin{align}
D_{KL}(Q'\|P')\ge-\log_2 P'(A).
\end{align}
Since $Q(\operatorname{supp}Q)=1$, applying this inequality
with $A=\operatorname{supp}Q$ gives
\begin{align}
D_{KL}(Q\|P_{\mathbf X})\ge-\log_2 P_{\mathbf X}(\operatorname{supp}Q)\ge-\log_2 \beta_r.
\end{align}
Finally, relative entropy is lower semicontinuous under
weak convergence, and the chosen near-optimal measures
$Q_n$ satisfy
\begin{align}
\psi_r(\tau_n)\ge D_{KL}(Q_n\|P_{\mathbf X})-\frac{1}{n}.
\end{align}
Consequently,
\begin{align}
\liminf_{n\to\infty}\psi_r(\tau_n)\ge\liminf_{n\to\infty}D_{KL}(Q_n\|P_{\mathbf X})
\ge D_{KL}(Q\|P_{\mathbf X})\ge-\log_2\beta_r.
\end{align}
Since the original sequence $\{\tau_n\}_{n\ge1}$ of positive
numbers converging to zero was arbitrary, we conclude that
\begin{align}
\liminf_{\tau\downarrow0}\psi_r(\tau)\ge-\log_2\beta_r.
\end{align}

\section{Proof of Theorem~\ref{thm_generative_rate_distortion_converse}}
\label{app:generative_rate_distortion_converse}
Fix an arbitrary sufficiently small $\epsilon>0$ and an arbitrary feasible joint distribution. For $P_{\hat U}$-a.e. $u$, define
\begin{align}
\delta_r(u):=\int_{\mathcal M}\int_{\mathcal M}\mathbf 1 \left\{d_g(\mathbf x, \hat{\mathbf{x}})\ge r\right\}Q_u(d\mathbf x)Q_u(d \hat{\mathbf{x}}).
\end{align}
By the Markov condition and conditional-distribution
preservation, $\mathbf X$ and $\hat{\mathbf X}$ are
conditionally independent and identically distributed
according to $Q_u$ given $\hat U=u$.
Consequently, the excess-distortion constraint gives
\begin{align}
\int_{\mathcal U}\delta_r(u)P_{\hat U}(du)\le\epsilon.
\end{align}
For an arbitrary threshold $\tau>\epsilon$, define
\begin{align}
\mathcal G_\tau:=\left\{u\in\mathcal U:\delta_r(u)\le\tau\right\}.
\end{align}
Its complement is
\begin{align}
\mathcal G_\tau^c=\left\{u\in\mathcal U:\delta_r(u)>\tau\right\}.
\end{align}
Since $\delta_r(u)>\tau$ on $\mathcal G_\tau^c$, we have
\begin{align}
\epsilon\ge\int_{\mathcal U}\delta_r(u)P_{\hat U}(du)\ge\int_{\mathcal G_\tau^c}\delta_r(u)P_{\hat U}(du)\ge\tau P_{\hat U}(\mathcal G_\tau^c).
\end{align}
Therefore,
\begin{align}
P_{\hat U}(\mathcal G_\tau)\ge1-\frac{\epsilon}{\tau}.
\end{align}
For every $u\in\mathcal G_\tau$, $Q_u$ satisfies
\begin{align}
\int_{\mathcal M}\int_{\mathcal M}\mathbf 1 \left\{d_g(\mathbf x, \hat{\mathbf{x}})\ge r\right\}Q_u(d\mathbf x)Q_u(d \hat{\mathbf{x}})\le\tau.
\end{align}
Thus, by the definition of $\psi_r(\tau)$,
\begin{align}
D_{KL}(Q_u\|P_{\mathbf X})\ge\psi_r(\tau),\qquad u\in\mathcal G_\tau.
\end{align}
Using the conditional-divergence representation of mutual information and the nonnegativity of relative entropy, we obtain
\begin{align}
I(\mathbf X;\hat U)&=\int_{\mathcal U}D_{KL}(Q_u\|P_{\mathbf X})P_{\hat U}(du)
\ge\int_{\mathcal G_\tau}D_{KL}(Q_u\|P_{\mathbf X})P_{\hat U}(du)
\ge\psi_r(\tau)P_{\hat U}(\mathcal G_\tau)
\ge\left(1-\frac{\epsilon}{\tau}\right)\psi_r(\tau).
\end{align}
Since this inequality holds for every feasible joint distribution, taking the infimum over all feasible choices gives
\begin{align}
R(r,\epsilon)\ge\left(1-\frac{\epsilon}{\tau}\right)\psi_r(\tau).
\end{align}

Choose $\tau=\sqrt{\epsilon}$. Since $\epsilon\in(0,1)$,
we have $\tau>\epsilon$. Moreover, as $\epsilon\downarrow0$,
\begin{align}
\tau\downarrow0,
\qquad
\frac{\epsilon}{\tau}=\sqrt{\epsilon}\downarrow0.
\end{align}
By Proposition~\ref{pro:constraint_covering},
\begin{align}
\liminf_{\epsilon\downarrow0}R(r,\epsilon)
\ge
\liminf_{\epsilon\downarrow0}
(1-\sqrt{\epsilon})\psi_r(\sqrt{\epsilon})
\ge-\log_2\beta_r.
\end{align}
Since $R(r,\epsilon)$ is nonincreasing in $\epsilon$,
its limit as $\epsilon\downarrow0$ exists in the
extended real line. Therefore,
\begin{align}
R(r)
=\lim_{\epsilon\downarrow0}R(r,\epsilon)
\ge-\log_2\beta_r.
\end{align}

\section{Proof of Theorem~\ref{thm_general_alpha_achievability}\label{app:thm_general_alpha_achievability}}
To prove the achievability bound, it suffices to construct a pair of valid probability kernels $P_{\hat U\mid{\mathbf{X}}}$ and $P_{\hat{{\mathbf{X}}}\mid\hat U}$ satisfying the constraints defining $R(r)$ and then compute the corresponding mutual information.

\textbf{Step A: Proof of the infimum}

We first establish the existence of a strictly positive
coverage function with finite logarithmic cost. Since $(\mathcal M,d_g)$ is compact, it is totally bounded. Hence, for every
$r>0$, there exist a positive integer $N\in\mathbb N$ and points
\begin{align}
\mathbf x_1,\ldots,\mathbf x_N\in\mathcal M
\end{align}
such that
\begin{align}
\mathcal M\subseteq\bigcup_{j=1}^N B_{r/2}(\mathbf x_j).
\end{align}
For every $j\in\{1,\ldots,N\}$ and every $\mathbf u,\mathbf v\in B_{r/2}(\mathbf x_j)$, the triangle inequality gives
\begin{align}
d_g(\mathbf u,\mathbf v)\le d_g(\mathbf u,\mathbf x_j)+d_g(\mathbf x_j,\mathbf v)\le r.
\end{align}
Therefore,
\begin{align}
\operatorname{diam}_g(B_{r/2}(\mathbf x_j)) \le r,
\end{align}
so that
\begin{align}
B_{r/2}(\mathbf{x}_j)\in\mathcal A_r, \qquad j=1,\ldots,N.
\end{align}
Thus, $\mathcal A_r$ admits a finite cover of $\mathcal M$.

Set
\begin{align}
A_j:=B_{r/2}(\mathbf x_j),\qquad
q_j:=\frac1N,\qquad j=1,\ldots,N.
\end{align}
The corresponding coverage function is
\begin{align}\label{eq_alpha_zero}
\alpha_0(\mathbf x)
:=\frac1N\sum_{j=1}^N\mathbf1_{A_j}(\mathbf x).
\end{align}
Consequently,
\begin{align}
0\le\mathbb E_{{\mathbf{X}}}\left[-\log_2\alpha_0({\mathbf{X}})\right]\le\log_2 N<\infty.
\label{eq:finite_alpha_exists}
\end{align}
Hence, the class
\begin{align}
\operatorname{VP}_r^s:=\left\{\alpha\in\operatorname{VP}_r:\mathbb E_{{\mathbf{X}}}\left[-\log_2\alpha({\mathbf{X}})\right]<\infty\right\}
\end{align}
is nonempty and for $\alpha\in\operatorname{VP}_r^s$, $\alpha({\mathbf{X}})>0$ almost surely.

Therefore, for each $\alpha \in \operatorname{VP}_r^s$, we construct the auxiliary communication random variable
$\hat{U}$ as 
\begin{align}
\hat{\mathcal U}:=\{1,\ldots,K\}.
\end{align} 
and then the valid probability kernel $P_{\hat{U}\mid{\mathbf{X}}}(k\mid\mathbf x)$ is given by
\begin{align}
P_{\hat{U}\mid{\mathbf{X}}}(k\mid\mathbf x):=\frac{q_k\mathbf 1_{A_k}(\mathbf x)}{\alpha(\mathbf x)},\qquad k\in \hat{\mathcal U}.
\end{align}
If $\alpha(\mathbf{X})=0$, define it arbitrarily.
The probability kernel is valid because for each fixed $k$, the mapping
\begin{align}
\mathbf x \mapsto P_{\hat{U}\mid{\mathbf{X}}}(k\mid\mathbf x)
\end{align}
is measurable, since $A_k$ is measurable and $\alpha$ is measurable and for every $\mathbf x$ satisfying $\alpha(\mathbf x)>0$,
\begin{align}
\sum_{k=1}^KP_{\hat{U}\mid{\mathbf{X}}}(k\mid\mathbf x)=\frac{\sum_{k=1}^Kq_k\mathbf 1_{A_k}(\mathbf x)}{\alpha(\mathbf x)}=1,
\end{align}

For every $k\in\hat{\mathcal U}$, define
\begin{align}
z_k:=\int_{A_k}\frac{1}{\alpha(\mathbf x)}dP_{{\mathbf{X}}}(\mathbf x).
\end{align}
we have 
\begin{align}
z_k\le \frac{P_\mathbf{X}(A_k)}{q_k}\le \frac{1}{q_k}\le \infty.
\end{align}
Therefore, for every $k$ satisfying $z_k>0$, define the generation probability kernel by
\begin{align}
P_{\hat{{\mathbf{X}}}\mid\hat{U}}(E\mid k):=\frac{1}{z_k}\int_{E\cap A_k}\frac{1}{\alpha(\hat{\mathbf x})}dP_{{\mathbf{X}}}(\hat{\mathbf x}),\qquad E\in\mathcal B(\mathcal M).
\end{align}
If $z_k=0$, define $P_{\hat{{\mathbf{X}}}\mid\hat{U}=k}$ arbitrarily. The probability kernel is valid because $\alpha$ and $E\cap A_k$ are Borel measurable, and  for $z_k>0$, 
\begin{align}
P_{\hat{{\mathbf{X}}}\mid\hat{U}}(\mathcal M\mid k)=\frac{1}{z_k}\int_{A_k}\frac{1}{\alpha(\hat{\mathbf x})}dP_{{\mathbf{X}}}(\hat{\mathbf x})=1.
\end{align}

We next verify the conditional distribution-preserving constraint.
The marginal probability of $\hat U=k$ is
\begin{align}
p_k:=P_{\hat U}(k)&=\int_{\mathcal M}P_{\hat U\mid{\mathbf{X}}}(k\mid\mathbf x)dP_{{\mathbf{X}}}(\mathbf x)
=q_k\int_{A_k}\frac{1}{\alpha(\mathbf x)}dP_{{\mathbf{X}}}(\mathbf x)
=q_kz_k.
\end{align}
For every $k$ satisfying $p_k>0$ and every
$E\in\mathcal B(\mathcal M)$, Bayes' rule gives
\begin{align}
P_{{\mathbf{X}}\mid\hat U}(E\mid k)&=\frac{1}{p_k}\int_EP_{\hat U\mid{\mathbf{X}}}(k\mid\mathbf x) dP_{{\mathbf{X}}}(\mathbf x)
=\frac{1}{q_kz_k}\int_{E\cap A_k}\frac{q_k}{\alpha(\mathbf x)}dP_{{\mathbf{X}}}(\mathbf x)
=\frac{1}{z_k}\int_{E\cap A_k}\frac{1}{\alpha(\mathbf x)}dP_{{\mathbf{X}}}(\mathbf x)
=P_{\hat{{\mathbf{X}}}\mid\hat U}(E\mid k).
\end{align}
Therefore,
\begin{align}
P_{\hat{{\mathbf{X}}}\mid\hat U}=P_{{\mathbf{X}}\mid\hat U},\qquad P_{\hat U}\text{-a.s.}
\end{align}

We next verify the distortion constraint. Conditioned on $\hat{U}=k$ with positive probability, the construction of the first kernel gives
\begin{align}
{\mathbf{X}}\in A_k\qquad\text{a.s.},
\end{align}
while the generation kernel gives
\begin{align}
\hat{{\mathbf{X}}}\in A_k\qquad\text{a.s.}
\end{align}
Since
\begin{align}
A_k\in\mathcal A_r,
\end{align}
we have
\begin{align}
\operatorname{diam}_g(A_k)\le r.
\end{align}
Therefore,
\begin{align}
\Pr\left\{d_g({\mathbf{X}},\hat{{\mathbf{X}}})\le r\right\}=1
\end{align}
Since $P_{{\mathbf{X}}}$ admits the density $f=dP_{{\mathbf{X}}}/dV_g$, the constructed conditional distributions are absolutely continuous with respect to $V_g$.
Together with the Markov chain ${\mathbf{X}}-\hat U-\hat{{\mathbf{X}}}$, this implies $P_{{\mathbf{X}},\hat{{\mathbf{X}}}\mid\hat U=k}$ is absolutely continuous with respect to $V_g\otimes V_g$ for every $k$ with $P_{\hat U}(k)>0$.

Also, since the level set
\begin{align}
\left\{(\mathbf \mathbf{x},\hat{\mathbf x}): d_g(\mathbf \mathbf{x},\hat{\mathbf x})=r\right\}
\end{align}
has zero $(V_g\otimes V_g)$-measure for $r>0$, we have
\begin{align}
\Pr\left\{d_g({\mathbf{X}},\hat{{\mathbf{X}}})=r\right\}=0.
\end{align}

Therefore, we have
\begin{align}
\Pr\left\{d_g({\mathbf{X}},\hat{{\mathbf{X}}})\ge r \right\}=0\le \epsilon
\end{align}
Thus, the constructed pair of probability kernels is feasible for the optimization defining $R(r,\epsilon)$ for every $\epsilon\in(0,1)$ and thus feasible for $R(r)$.

It remains to bound the mutual information. For every $P_{\hat{U}\mid{\mathbf{X}}}(k\mid\mathbf x)>0$, $\mathbf 1_{A_k}(\mathbf x)=1$. Therefore, 
\begin{align}
\frac{P_{\hat{U}\mid{\mathbf{X}}}(k\mid{\mathbf{X}})}{P_{\hat{U}}(k)}=\frac{\frac{q_k}{\alpha({\mathbf{X}})}}{q_kz_k}=\frac{1}{\alpha({\mathbf{X}})z_k}.
\end{align}

Hence
\begin{align}
I({\mathbf{X}};\hat{U})&=\sum_{k:p_k>0}\int_{\mathcal M}P_{\hat{U}\mid{\mathbf{X}}}(k\mid\mathbf x)\log_2\frac{P_{\hat{U}\mid{\mathbf{X}}}(k\mid\mathbf x)}{p_k} dP_{{\mathbf{X}}}(\mathbf x)=\sum_{k:p_k>0}\int_{A_k}\frac{q_k}{\alpha(\mathbf x)}\left[-\log_2\alpha(\mathbf x)-\log_2z_k\right] dP_{{\mathbf{X}}}(\mathbf x)\nonumber\\
%&=\mathbb E_{{\mathbf{X}}}\left[-\log_2\alpha({\mathbf{X}})\right]-\sum_{k:p_k>0}p_k\log_2z_k.
\end{align}

For the first term, we have
\begin{align}
&\sum_{k:p_k>0}q_k\int_{A_k}\frac{1}{\alpha(\mathbf x)}\left[-\log_2\alpha(\mathbf x)\right] dP_{{\mathbf{X}}}(\mathbf x)\nonumber\\
&=\int_{\mathcal M}\frac{\sum_{k=1}^Kq_k\mathbf 1_{A_k}(\mathbf x)}{\alpha(\mathbf x)}\left[-\log_2\alpha(\mathbf x)\right] dP_{\mathbf{X}}(\mathbf{x})\nonumber\\
&=\int_{\mathcal M}\left[-\log_2\alpha(\mathbf x)\right] dP_{{\mathbf{X}}}(\mathbf x)\nonumber\\
&=\mathbb E_{{\mathbf{X}}}\left[-\log_2\alpha({\mathbf{X}})\right].
\label{eq:first_MI_term}
\end{align}
Here, indices with $p_k=0$ contribute zero and therefore do not affect the integral.
For the second term, we have
\begin{align}
&\sum_{k:p_k>0}\int_{A_k}\frac{q_k}{\alpha(\mathbf x)}\left[\log_2z_k\right] dP_{{\mathbf{X}}}(\mathbf x)\nonumber\\
&=\sum_{k:p_k>0}
q_k\log_2z_k\int_{A_k}\frac{1}{\alpha(\mathbf x)}dP_{{\mathbf{X}}}(\mathbf x)=\sum_{k:p_k>0}q_kz_k\log_2z_k\nonumber\\
&=\sum_{k:p_k>0}p_k\log_2\frac{p_k}{q_k}=D_{\rm KL}(\mathbf p\Vert\mathbf q)\geq 0 .
\end{align}
where
\begin{align}
\mathbf p:=(p_1,\ldots,p_K),\qquad \mathbf q:=(q_1,\ldots,q_K).
\end{align}

Combining the two terms, for every $\alpha\in\operatorname{VP}_r^s$, we have
\begin{align}
I({\mathbf{X}};\hat{U})=\mathbb E_{{\mathbf{X}}}\left[-\log_2\alpha({\mathbf{X}})\right]-D_{\rm KL}(\mathbf p\Vert\mathbf q).
\label{eq:MI_alpha_z}
\end{align}

Therefore, taking $\epsilon \rightarrow0$ according to the definition of $R(r)$ yields 
\begin{align}\label{eq_Rr_inf}
R(r)\leq \inf_{\alpha\in {\operatorname{VP}}_r^s}\mathbb E_{{\mathbf{X}}}\left[-\log_2\alpha({\mathbf{X}})\right].
\end{align}
By \eqref{eq:finite_alpha_exists}, a coverage function
with finite logarithmic cost exists. Functions with
infinite logarithmic cost therefore do not affect
the infimum. Consequently, we have
\begin{align}
R(r)\leq \inf_{\alpha\in {\operatorname{VP}}_r}\mathbb E_{{\mathbf{X}}}\left[-\log_2\alpha({\mathbf{X}})\right].
\end{align}

\textbf{Step B: Proof of the infimum attained with $L^1(P_{\mathbf X})$-closure}

To obtain a tractable characterization of the above infimum, especially
for symmetric manifolds where the natural optimal coverage function may
arise as a limit of finite mixtures, we extend $\operatorname{VP}_r$ to
its $L^1(P_{{\mathbf{X}}})$-closure $\overline{\operatorname{VP}}_r$.
As $\operatorname{VP}_r\subseteq\overline{\operatorname{VP}}_r$, 
\begin{align}\label{eq:VP_closure_first_direction}
\inf_{\alpha\in \overline{\operatorname{VP}}_r} \mathbb E_{{\mathbf{X}}}\left[-\log_2\alpha({\mathbf{X}})\right]\leq \inf_{\alpha\in \operatorname{\operatorname{VP}}_r} \mathbb E_{{\mathbf{X}}}\left[-\log_2\alpha({\mathbf{X}})\right]
\end{align}
On the other hand, let $\alpha\in\overline{\operatorname{VP}}_r$ satisfy $\mathbb E_{{\mathbf{X}}}\left[-\log_2\alpha({\mathbf{X}})\right]<\infty$, meaning that $\alpha({\mathbf{X}})>0$ is almost surely. By the definition of the $L^1(P_{{\mathbf{X}}})$-closure, there exists a sequence
$\{\alpha_n\}_{n\ge1}\subseteq\operatorname{VP}_r$ such that
\begin{align}
\lim_{n\to\infty}\mathbb E_{{\mathbf{X}}}\left[\left|\alpha_n-\alpha\right|\right]=0.
\end{align}
Combining $\alpha_0$ in \eqref{eq_alpha_zero}, for any fixed $\delta\in(0,1)$, we define
\begin{align}
\beta_{n,\delta}:=(1-\delta)\alpha_n+\delta\alpha_0.
\end{align}
Since $\operatorname{VP}_r$ is convex, we have
\begin{align}
\beta_{n,\delta}\in\operatorname{VP}_r
\end{align}
Define also
\begin{align}
\beta_\delta:=(1-\delta)\alpha+\delta\alpha_0.
\end{align}
Then
\begin{align}
\lim_{n\to\infty}\mathbb E_{{\mathbf{X}}}\left[\left|\beta_{n,\delta}-\beta_\delta\right|\right]=\lim_{n\to\infty}\mathbb E_{{\mathbf{X}}}\left[(1-\delta)\left|\alpha_n-\alpha\right|\right]=0
\end{align}
Moreover, by $\alpha_0\geq\frac{1}{N}$, we have 
\begin{align}
\frac{\delta}{N}\le\beta_{n,\delta}(\mathbf x)\le1,\quad\frac{\delta}{N}\le\beta_{\delta}(\mathbf x)\le1,\quad P_{{\mathbf{X}}}\text{-a.e.}
\label{eq:beta_uniform_positive}
\end{align}
Combining that for fixed $\delta>0$, the function $f(t)=-\log_2 t$ is Lipschitz continuous on $[\delta/N,1]$, with Lipschitz constant $N/(\delta\ln 2)$, we have
\begin{align}
\lim_{n\to\infty}\left|\mathbb{E}_{\mathbf{X}}[-\log_2 \beta_{n,\delta}({\mathbf{X}})]-\mathbb{E}_{\mathbf{X}}[-\log_2 \beta_{\delta}({\mathbf{X}})]\right|
\leq\frac{N}{\delta\ln 2}\lim_{n\to\infty} \mathbb{E}_{\mathbf{X}}\left[\left|\beta_{n,\delta}({\mathbf{X}})-\beta_\delta({\mathbf{X}})\right|\right]=0, \label{eq:F_beta_n_convergence}
\end{align}
% meaning that 
% \begin{align}
%     \lim_{n\to\infty} \beta_{n,\delta}({\mathbf{X}})=\beta_{\delta}({\mathbf{X}})
% \label{eq_limit_betan_to_beta}
% \end{align}

Next, since $-\log_2 t$ is convex, we have
\begin{align}
-\log_2\beta_\delta(\mathbf x)\leq(1-\delta)\left[-\log_2\alpha(\mathbf x)\right]+\delta\left[-\log_2\alpha_0(\mathbf x)\right].
\end{align}
Taking expectations gives
\begin{align}
\mathbb E_{{\mathbf{X}}}\left[-\log_2\beta_\delta({\mathbf{X}})\right]&\leq(1-\delta)\mathbb E_{{\mathbf{X}}}\left[-\log_2\alpha({\mathbf{X}})\right]
+\delta\mathbb E_{{\mathbf{X}}}\left[-\log_2\alpha_0({\mathbf{X}})\right].
\end{align}
Hence,
\begin{align}
\limsup_{\delta\downarrow0}\mathbb E_{{\mathbf{X}}}\left[-\log_2\beta_\delta({\mathbf{X}})\right]\leq\mathbb E_{{\mathbf{X}}}\left[-\log_2\alpha({\mathbf{X}})\right].
\label{eq:beta_delta_limsup}
\end{align}

On the other hand, since
$-\log_2\beta_\delta({\mathbf{X}})\geq0$ and
\begin{align}
\lim_{\delta\downarrow0}\left[-\log_2\beta_\delta({\mathbf{X}})\right]=-\log_2\alpha({\mathbf{X}})\qquad P_{{\mathbf{X}}}\text{-a.s.},
\end{align}
Fatou's lemma gives
\begin{align}
\mathbb E_{{\mathbf{X}}}\left[-\log_2\alpha({\mathbf{X}})\right]\leq\liminf_{\delta\downarrow0}\mathbb E_{{\mathbf{X}}}\left[-\log_2\beta_\delta({\mathbf{X}})\right].
\label{eq:beta_delta_liminf}
\end{align}
Combining \eqref{eq:beta_delta_limsup} and
\eqref{eq:beta_delta_liminf}, we obtain
\begin{align}
\lim_{\delta\downarrow0}\mathbb E_{{\mathbf{X}}}\left[-\log_2\beta_\delta({\mathbf{X}})\right]=\mathbb E_{{\mathbf{X}}}\left[-\log_2\alpha({\mathbf{X}})\right].
\label{eq:beta_delta_cost_limit}
\end{align}

Now, let $\eta>0$ be arbitrary. By
\eqref{eq:beta_delta_cost_limit}, there exists $\delta_0>0$ such that
for every $0<\delta<\delta_0$,
\begin{align}\left|\mathbb E_{{\mathbf{X}}}\left[-\log_2\beta_\delta({\mathbf{X}})\right]-\mathbb E_{{\mathbf{X}}}\left[-\log_2\alpha({\mathbf{X}})\right]\right|<\frac{\eta}{2}.
\end{align}
Fix such a $\delta$. By \eqref{eq:F_beta_n_convergence}, there exists
$n_0\in\mathbb N$ such that for every $n\geq n_0$,
\begin{align}
\left|\mathbb E_{{\mathbf{X}}}\left[-\log_2\beta_{n,\delta}({\mathbf{X}})\right]-\mathbb E_{{\mathbf{X}}}\left[
-\log_2\beta_\delta({\mathbf{X}})\right]\right|<\frac{\eta}{2}.
\end{align}
Therefore,
\begin{align}
\mathbb E_{{\mathbf{X}}}\left[-\log_2\beta_{n,\delta}({\mathbf{X}})\right]<\mathbb E_{{\mathbf{X}}}\left[-\log_2\alpha({\mathbf{X}})\right]+\eta.
\label{eq:recovery_cost}
\end{align}
Since
$\beta_{n,\delta}\in\operatorname{VP}_r$, it follows that
\begin{align}
\inf_{ \zeta\in\operatorname{VP}_r}\mathbb E_{{\mathbf{X}}}\left[-\log_2 \zeta({\mathbf{X}})\right]\leq\mathbb E_{{\mathbf{X}}}\left[-\log_2\alpha({\mathbf{X}})\right]+\eta.
\end{align}
Since $\eta>0$ is arbitrary,
\begin{align}
\inf_{ \zeta\in\operatorname{VP}_r}\mathbb E_{{\mathbf{X}}}\left[-\log_2 \zeta({\mathbf{X}})\right]\leq\mathbb E_{{\mathbf{X}}}\left[-\log_2\alpha({\mathbf{X}})\right].
\end{align}

Since the above inequality holds for every
$\alpha\in\overline{\operatorname{VP}}_r$ with finite logarithmic
moment, while elements with infinite logarithmic moment do not affect
the finite infimum, we obtain
\begin{align}
\inf_{ \zeta\in\operatorname{VP}_r}\mathbb E_{{\mathbf{X}}}\left[-\log_2 \zeta({\mathbf{X}})\right]\leq\inf_{\alpha\in\overline{\operatorname{VP}}_r}\mathbb E_{{\mathbf{X}}}\left[-\log_2\alpha({\mathbf{X}})\right].
\label{eq:VP_closure_reverse_direction}
\end{align}
Combining \eqref{eq:VP_closure_first_direction} and
\eqref{eq:VP_closure_reverse_direction} yields
\begin{align}
\inf_{\alpha\in\operatorname{VP}_r}\mathbb E_{{\mathbf{X}}}\left[-\log_2\alpha({\mathbf{X}})\right]=\inf_{\alpha\in\overline{\operatorname{VP}}_r}\mathbb E_{{\mathbf{X}}}\left[-\log_2\alpha({\mathbf{X}})\right].
\label{eq:VP_VPbar_inf_equality}
\end{align}

It remains to show that the infimum over $\overline{\operatorname{VP}}_r$ is attained. For every $\alpha\in\overline{\operatorname{VP}}_r$, we have
\begin{align}
0\leq\alpha(\mathbf x)\leq 1,\qquad P_{{\mathbf{X}}}\text{-a.e.}
\end{align}
Therefore, $\overline{\operatorname{VP}}_r$ is uniformly integrable in $L^1(P_{{\mathbf{X}}})$. By the Dunford--Pettis theorem \cite{DunfordPettis1940}, $\overline{\operatorname{VP}}_r$ is relatively weakly compact in $L^1(P_{{\mathbf{X}}})$.

Moreover, $\operatorname{VP}_r$ is convex, and hence its $L^1(P_{{\mathbf{X}}})$-closure $\overline{\operatorname{VP}}_r$ is convex and norm closed. Since every norm-closed convex subset of a Banach space is weakly closed, $\overline{\operatorname{VP}}_r$ is weakly compact in $L^1(P_{{\mathbf{X}}})$.

Next, since
\begin{align}
\phi(t):=-\log_2 t,\qquad \phi(0):=+\infty,
\end{align}
is convex and lower semicontinuous on $[0,1]$, the integral functional
\begin{align}
\alpha\mapsto\mathbb E_{{\mathbf{X}}}\left[-\log_2\alpha({\mathbf{X}})\right]
\end{align}
is weakly lower semicontinuous on $L^1(P_{{\mathbf{X}}})$ \cite{Ioffe1977}.

Furthermore, the finite-cover construction provides $\alpha_0\in\operatorname{VP}_r\subseteq\overline{\operatorname{VP}}_r$ such that
\begin{align}
\mathbb E_{{\mathbf{X}}}\left[-\log_2 \alpha_0(\mathbf{X})\right]<\infty.
\end{align}
Hence, the objective functional is finite at least at one element of $\overline{\operatorname{VP}}_r$.

Since $\overline{\operatorname{VP}}_r$ is weakly compact and the objective functional is weakly lower semicontinuous, the infimum is attained. Therefore, there exists $\alpha^\star\in\overline{\operatorname{VP}}_r$ such that
\begin{align}\label{eq_inf_min}
\inf_{\alpha\in\overline{\operatorname{VP}}_r}\mathbb E_{{\mathbf{X}}}\left[-\log_2\alpha({\mathbf{X}})\right]&=\mathbb E_{{\mathbf{X}}}\left[-\log_2\alpha^\star({\mathbf{X}})\right]=\min_{\alpha\in\overline{\operatorname{VP}}_r}\mathbb E_{{\mathbf{X}}}\left[-\log_2\alpha({\mathbf{X}})\right].
\end{align}
Finally, combining \eqref{eq_Rr_inf}, \eqref{eq:VP_VPbar_inf_equality}, and \eqref{eq_inf_min}, we obtain
\begin{align}
R(r)\le\min_{\alpha\in\overline{\operatorname{VP}}_r}\mathbb E_{{\mathbf{X}}}\left[-\log_2\alpha({\mathbf{X}})\right].
\end{align}

%---------------------------------------------%
\footnotesize
\bibliographystyle{IEEEtran}
\bibliography{reference}{}
\end{document}